\DeclarePairedDelimiter{\ceil}{\lceil}{\rceil}
\newcommand{\ProcCS}{\texttt{\textup{GridSearching}}}
\newcommand{\ModProcCS}{\texttt{\textup{ModGridSearching}}}
\newcommand{\ProcClean}{\texttt{\textup{CleanExpansion}}}
\newcommand{\ProcUpgrade}{\texttt{\textup{UpgradeCheckpoints}}}
\newcommand{\ProcConnSearch}{\texttt{\textup{ConnectedSearching}}}
\newcommand{\ModProcConnSearch}{\texttt{\textup{ModConnectedSearching}}}
\newcommand{\mcs}[1]{\texttt{mcs}(#1)}
\newcommand{\csn}[1]{\texttt{cs}(#1)}
\newcommand{\fr}[2]{F\left(\left(#1\right),\left(#2\right)\right)}
\newcommand{\aFr}{{\cal F}}
\newcommand{\aC}{{\cal C}}
\newcommand{\cS}{{\cal S}}
\newcommand{\size}{\sqrt{n}}
\newcommand{\border}[1]{\delta(#1)}
\newcommand{\grid}{G}
\newcommand{\ver}[2]{v\left(#1,#2\right)}
\newcommand{\nrPred}{10}
\newcommand{\checkpoints}{\mathcal{C}}
\newtheorem{theorem}{Theorem}[section]
\newtheorem{lemma}{Lemma}[section]
\newtheorem{observation}{Observation}[section]
\newtheorem{remark}{Remark}[section]
\newcommand{\weightOfC}[2]{\omega_{#2}(#1)}
\newcommand{\expansionOfC}[2]{\mathcal{E}(#1,#2)}
\newcommand{\bottleneckOfC}[1]{b(#1)}
\newcommand{\leqF}{\prec}
\newcommand{\rectOfF}[2]{\mathcal{R}(#1,#2)} % #1=F, #2=numer
\newcommand{\expOfC}[2]{#1\langle #2\rangle}
\newcommand{\expOfCplus}[2]{#1^{+}\langle #2\rangle}
\newcommand{\st}{\hspace{0.1cm}\bigl|\bigr.\hspace{0.1cm}}
\begin{document}

\renewcommand{\thefootnote}{\fnsymbol{footnote}}
\footnotetext[1]{Research partially supported by National Science Centre, Poland, grant number 2015/17/B/ST6/01887.}
\footnotetext[2]{Faculty of Electronics, Telecommunications and Informatics, Gda{\'n}sk University of Technology, Gda{\'n}sk, Poland}

\title{Distributed Searching of Partial Grids\footnotemark[1]}
\author{Dariusz Dereniowski\footnotemark[2] \and
        Dorota Urba{\'n}ska\footnotemark[2]}

\date{}

\maketitle
\thispagestyle{empty}

\floatname{algorithm}{Procedure}

\begin{abstract}
We consider the following distributed pursuit-evasion problem. A team of mobile agents called searchers starts at an arbitrary node of an unknown $n$-node network. 
Their goal is to execute a search strategy that guarantees capturing a fast and invisible intruder regardless of its movements using as few agents as possible.
We restrict our attention to networks that are embedded into partial grids: nodes are placed on the plane at integer coordinates and only nodes at distance one can be adjacent.
We give a distributed algorithm for the searchers that allow them to compute a connected and monotone strategy that guarantees searching any unknown partial grid with the use of $O(\size)$ searchers.
As for a lower bound, not only there exist partial grids that require $\Omega(\size)$ searchers, but we prove that for each distributed searching algorithm there is a partial grid that forces the algorithm to use $\Omega(\size)$ searchers but $O(\log n)$ searchers are sufficient in the offline scenario.
This gives a lower bound of $\Omega(\size/\log n)$ in terms of achievable competitive ratio of any distributed algorithm.
\end{abstract}

\vspace*{1cm}

{\bf keywords:} connected search number, distributed searching, graph searching, partial grid, pursuit-evasion

\section{Introduction}
A team of mobile autonomous robots wants to search an area with the goal of finding a mobile intruder (or lost entity).
The intruder has several properties that dictate how a search should be conducted.
First, the intruder is invisible and therefore the robots may conclude its potential locations only from the history of their own moves.
Second, it is assumed that the speed of the intruder is unknown and therefore the robots build their search strategy assuming that the intruder is very fast: may traverse arbitrarily long distance between any two actions of a robot.
Third, the intruder is very clever, i.e., it will avoid being captured as long as possible; in other words we may imagine that it knows locations of robots and their future movements at any point.
This assumption enforces robots to consider the worst case scenario for them since they want to have a search strategy that guarantees interception.
The above problem is usually restated in discrete terms, naturally expressing the search game using graph-theoretic notation.
Following the widely used terminology, the mobile entities performing the search are called \emph{searchers}.

\medskip
In this work we focus on the graph-theoretic problem statement, where the searchers operate in a given graph in which they move along edges.
Moreover, what greatly influences algorithmic approach is assumption whether the searchers know the graph in advance (offline version of the problem) or whether the graph is unknown and the searchers learn its structure while conducting the search (online or distributed setting).
We will shortly review both approaches, giving later a formal statement of the problem we study in this work.
In all cases we are interested in minimizing the number of searchers needed to clear the given network.\footnote{In this work the terms graph and network are used exchangeably.}

\medskip\noindent
\textbf{Off-line searching}.
The offline graph searching models are extensively studied and numerous deep results have been obtained, providing insight into not only the problem itself but also enriching the more widely understood graph theory through the connections between graph searching games and many graph parameters, e.g., pathwidth, treewidth, branchwidth, bandwidth, profile, interval thickness, vertex separation number; see e.g. \cite{FT08} for a survey and further references.
The historically first studied graph searching model is called \emph{edge search} \cite{Parsons76,Petrov82}.
In this problem, the goal is to construct a search strategy that guarantees capturing a fast and invisible fugitive (thus, the strategy must ensure success regardless of the moves performed by the fugitive) in a graph that is given as an input to an algorithm computing a search strategy.
A search strategy itself is a sequence of \emph{moves}, where each move is one of the following: (i) placing a searcher on any graph node; (ii) removing a searcher from the node it occupies; (iii) sliding a searcher along an edge in order to clear it.
Since we adopt the connected searching problem in our distributed model, we point out to few recent works on the problem
\cite{BFFFNST12,BestGTZ2015,Dereniowski11,Dereniowski12,FlocchiniHL07,FlocchiniHL08}.

\medskip\noindent
\textbf{Distributed searching}.
In the distributed, or on-line, version of the problem it is assumed that the network is unknown in advance to the searchers.
In this setting, some assumptions need to be made.
First, only \emph{monotone} search strategies are considered.
In a monotone search strategy, once an edge has been cleared, it must remain clean till the end of the search; in other words, the subgraph composed of edges that may contain the fugitive may only shrink as the search progresses thus disallowing any recontaminations.
This assumption is dictated by an observation that otherwise the searchers may first learn the structure of the network by exploring it (and thus ignoring the possibility of capturing the intruder at this stage) and once the network is known, they can compute a search strategy by using an off-line algorithm and finally execute the strategy. The problem then reduces to exploration and map construction, well studied problems in distributed computing.
Another natural assumption is to forbid placing a searcher on a node that has not been visited before.

We consider \emph{connected} search strategies in this work, i.e., strategies that guarantee that at any given time point the subgraph that is clean is connected.
Note that this allows us to assume that all searchers start at some node called the \emph{homebase} and only moves of type (iii) are then made (see the definition of edge search above).
Indeed, removing a searcher from a node $u$ and placing it on another $v$ (i.e., \emph{jumping}) one may be replaced\footnote{Note that this observation is true as long as the considered optimization criterion is to minimize the number of searchers. For other criteria, like e.g. search time, the exclusion of jumping may be potentially limiting.} by a sequence of sliding moves along a path from $u$ to $v$ consisting of clean edges only (such a path must exists due to connectedness and monotonicity).

\subsection{Related work}

\noindent\textbf{Off-line problems}.
One of the central questions raised in the context of various graph searching problems is the monotonicity which can be stated as follows.
A searching problem is said to be \emph{monotone} if there exists a monotone search strategy solving the problem.
Note that proving monotonicity is a tool that allows to conclude membership in NP for a given problem.
It is known that the edge search problem is monotone \cite{LaPaugh93}.
On the other hand, the connected search is not monotone \cite{YDA09}.
A question related to the latter searching model is: how many extra searchers one needs to ensure connectivity.
It turns out that each monotone edge search strategy can be converted (in polynomial time) into a monotone connected one by approximately `doubling' the number of searchers \cite{Dereniowski12SIDMA}.
Thus, for asymptotic results, like the one in this work, this gives another reason that justifies restricting attention to monotone connected search strategies.

\medskip\noindent\textbf{Distributed searching}.
In most cases, when designing distributed searching algorithms, the monotonicity requirement is adopted.
(See \cite{BlinFNV08} for an example how an optimal connected search strategy can be constructed in a distributed fashion when recontamination is allowed.)
During construction of a monotone strategy in a distributed way, there is naturally some `cost' involved in terms of increased number of searchers required for guarding --- this cost measured as the ratio of number of searchers that each distributed algorithm needs to use for some $n$-node graph and its monotone connected (off-line) search number is know to be $\Omega(n/\log n)$ \cite{INS09}.
In the realm of distributed algorithms, natural questions arise with respect to the amount (and type) of additional information regarding the underlying network given a priori to an algorithm.
In \cite{NS09} is was proved that $O(n\log n)$ bits of \emph{advice} are sufficient to construct an optimal connected monotone search strategy (the concept of such quantitative approach to advice analysis was introduced in \cite{FIP06}).
An example of an algorithmic approaches in a very weak computational model see e.g. \cite{BlinBN12,DAngeloNN14}.

Grid networks were studied in \cite{DaadaaFZ10} where the searching model used the concept of temporal immunity: a node after cleaning remains protected (even if unguarded) against recontamination for a certain amount of time.
For other searching works involving immunity see e.g. \cite{FlocchiniLPS13,FlocchiniLPS16}.
For other distributed searching models and algorithms for specific network topologies see \cite{CaiFS14,FlocchiniHL08,GoncalvesLMF10,MoravejiSZ11}.
We provide in Section~\ref{sec:conclusions} a brief discussion of a potential applicability of our result in `continuous' environments, like polygons.

\medskip\noindent\textbf{Applications in robotics}.
We note that our results may be of particular interest not only providing theoretical insight into searching dynamics in distributed agent computations, but may also find applications in the field of robotics.
Most investigations oriented towards algorithms that can be applied on physical devices need to deal with the problem of modeling of real world.
This can be done either by discretizing it (usually via graph theory) or by building algorithms that work in continuous search space and need to address the geometric issues that emerge.
(In Section~\ref{sec:polygon}, we add a brief discussion on this subject from the point of view of our results.)
Having in mind vast literature on the subject we point out interested reader by providing few references to recent works in this field \cite{ChungHI11,DurhamFB12,HSDK09,KC10,RaboinKN12,RobinL16,RodriguezDBMMMKZA11,SachsLR04,StifflerO14}.

\subsection{Outline of this work}
The next section introduces the notation used in this work and provides problem statement.
It is subdivided so that Section~\ref{sec:search-def} defines the graph searching problem we study while Section~\ref{sec:gridnotation} introduces the terminology related to the partial grid networks we consider in this paper.
Section~\ref{sec:lower} provides a construction of a class of $n$-node networks such that each distributed algorithm uses $\Theta(\sqrt{n})$ searchers which turns out to be $\Omega(\sqrt{n}/\log n)$ times more that an optimal off-line algorithm would use (recall that by off-line algorithm we refer to the case when the entire network is given as an input and hence is known in advance to the algorithm).
This serves as a lower bound in our analysis.

Section~\ref{sec:algorithm} describes a distributed algorithm that performs a guaranteed search in partial grids and it is assumed that the algorithm is given an upper bound $n$ on the size of the network.
We point out that this algorithm uses a distributed procedure from \cite{BorowieckiDK15} as a subroutine that is called many times to clear selected parts of a grid, and it can be seen as a generalization from a `linear' graph structure studied in \cite{BorowieckiDK15} to a $2$-dimensional structure discussed in this work.
Also, although both algorithms are conducted via some greedy rules which dictate how a search should `expand' to unknown parts of the graph, the analysis of our algorithm is different from the one in \cite{BorowieckiDK15}.

Then, in Section~\ref{sec:proof} we prove the correctness of the algorithm and provide an upper bound on its performance: it is using $O(\sqrt{n})$ searchers for any partial grid network.
In Section~\ref{sec:unknown} we consider a modified version of the algorithm, which receives no information on the underlying graph in advance, and we prove that the algorithm also uses $O(\sqrt{n})$ searchers.
This result, stated in Theorem~\ref{thm:unknown}, is the main contribution of this work.
We finish with conclusions in Section~\ref{sec:conclusions}, giving few remarks on how our work relates to searching two-dimensional environments, like polygons with holes.
As there are many open problems and research directions related to the subject, we list some of them also in Section~\ref{sec:conclusions}.

\section {Definitions and terminology} \label{sec:terminology}
In this section we will present the notation we use.
We consider only simple undirected connected graphs.
Given any graph $G=(V,E)$ and $X\subseteq V$, $\grid[X]$ is the subgraph of $G$ \emph{induced by} $X$: its node set is $X$ and consists of all edges $\{u,v\}$ of $G$ having both endpoints in $X$.

\subsection{Problem statement} \label{sec:search-def}

A \emph{connected $k$-search strategy} $\cS$ for a network $G$ is defined as follows.
Initially, $k$ searchers are placed on a node $h$ of $G$, called the \emph{homebase}.
(We also say that $\cS$ \emph{starts at} $h$.)
Then, $\cS$ is a sequence of moves, where each \emph{move} consists of selecting one searcher present at some node $u$ and sliding the searcher along an edge $\{u,v\}$.
(Thus, the searcher moves from its current location to one of the neighbors.)
Initially, all edges are \emph{contaminated}.

After each move of sliding a searcher along an edge $\{u,v\}$ it is declared to be \emph{clean}. It becomes contaminated again (\emph{recontaminated}) if at any time during execution of the strategy $\cS$ at least one of its endpoints is not occupied by a searcher and is incident to a contaminated edge.
If no recontamination happens in $\cS$, then $\cS$ is called \emph{monotone}.
Regardless of whether the strategy is monotone or not, we require that the subgraph consisting of all clean edges is connected after each move of the search strategy.
Finally, we require that after the last move of $\cS$ all edges are clean.

The minimum $k$ such that there exist a node $h$ and a (monotone) connected $k$-search strategy that starts at $h$ is called the (monotone) \emph{connected search number} of $G$ and denoted by ($\mcs{G}$, respectively) $\csn{G}$.

\medskip
Having defined a search strategy, we now state the distributed model we use.
All searchers start at the homebase --- a selected node of the network.
The network itself is not known in advance to the searchers.
In fact, the searchers have no information about the network.
(We note here that our main algorithmic result will be obtained in two stages: first we describe an algorithm that as an input receives the upper bound $n$ of the size of the network and then we use it to obtain our main result, an algorithm that works without any a priori information about the network.)
We assume that nodes are anonymous and searchers have identifiers.
The edges incident to each node are marked with unique labels (port numbers) and because only partial grids are considered in this work (for a definition see Section~\ref{sec:gridnotation}) we assume that labels naturally reflect all possible directions for each edge (i.e., left, right, up and down).

For the searchers, we assume that they communicate \emph{locally} by exchanging information when present at the same node.
Our algorithm is stated as there existed \emph{global} communication but it can be easily turned into required one with local communication as follows: we can designate one searcher called the \emph{leader} who will be performing the following actions at the beginning of each move of the search strategy to be executed.
First, the leader visits all nodes of the subgraph searched to date and gathers complete information about its structure and positions of all other searchers, then the leader computes the next move and finally visits all searchers to pass the information about the next move.
Then, the move is performed by the agents.\footnote{Note that the actions of the leader clearly contain a lot of excess work in terms of the number of moves it performs; since the criteria as time or cost (number of sliding moves) are out to scope of this work, we will leave the reader with such a simple leader implementation.}

Our algorithm is described for the synchronous model in which time is divided into steps, each step having the same unit length duration allowing each searcher to perform its local computations and slide along an edge if the searcher decides to move.
We note that this assumption can be lifted and the algorithm can be easily restated to be asynchronous.
Indeed, having one agent that is the leader one can simulate synchronous behavior of the agents in such a way that the leader waits for the completion of the current move of another searcher and then informs the searcher that is supposed to perform the next move, dictated by the search strategy, to start the move.

As to the memory model, our algorithm requires that the memory size of the searchers is polynomial in the size of the network, and we do not attempt to optimize this parameter.

\subsection {Partial grid notation} \label{sec:gridnotation}

We assume that a partial grid graph is embedded into two-dimensional Cartesian coordinate system, with a horizontal x-axis and vertical y-axis. For convenience, the homebase is located in a $(0,0)$ position.
We define a partial grid $\grid = (V, E)$ as a set of nodes $V$ and edges $E$, where following conditions hold:
\begin{enumerate}
\item for every ${v = \ver{x}{y} \in V}$, $x$ and $y$ are integer coordinates of the node $v$,
\item for any two adjacent nodes $\ver{x}{y}$ and $\ver{x'}{y'}$ the distance between $(x,y)$ and $(x',y')$ equals one (in Euclidean metric).
\end{enumerate}
In this work $n$ denotes an upper bound of the number of nodes of a partial grid, such that $\sqrt{n}$ is an integer.

Let us notice here that nodes at distance one in the grid are not necessarily neighbors in the graph, thus a partial grid defined in such a way can take all possible shapes, i.e., it can be a tree or a mesh, and it can contain holes.
We note that some simple graphs can be embedded in various ways and for different embeddings our algorithm may perform differently.

Informally speaking, our algorithm will conduct a search by expanding the clean part of the graph from one `checkpoint' to another.
These checkpoints (defined formally later) will be subsets of nodes and their potential placements on the partial grid are dictated by a concept of a \textit{frontier}.
Take any $x = i\size$ for some integer $i$, $y = j\size$ for some integer $j$ and take $i',j' \in \{0,1\}, i' \neq j'$.
Then, the line segment with endpoints $(x,y)$ and $(x + \size i',y + \size j')$ is called a \emph{frontier} and denoted by $\fr{x,y}{x + \size  i',y + \size j'}$.
Whenever the endpoints of a frontier are clear from the context or not important we will omit them. The frontier $\fr{0,0}{\sqrt{n},0}$ that contains the origin is called the \textit{homebase frontier} and the set of all frontiers is denoted by $\aFr$. We will also divide frontiers into \textit{vertical} and \textit{horizontal} ones, where coordinates of two extreme nodes do not differ on first and second coordinate, respectively.

The subgraph induced by all nodes that belong to a frontier $F$ is denoted by $\grid[F]$.

For $i \in \{1,\dots,\size\}$ and some frontier $F=\fr{x,y}{x',y'}$, where $x\leq x'$ and $y\leq y'$, we define \textit{an $i$-th rectangle of} $F$, denoted by $\rectOfF{F}{i}$, as the rectangle with corner vertices $(x-i,y-i)$, $(x-i,y+i)$, $(x'+i,y'-i)$, $(x'+i,y'+i)$ if $F$ is horizontal and as the rectangle with corner vertices $(x-i,y-i)$, $(x+i,y-i)$, $(x'-i,y'+i)$, $(x'+i,y'+i)$ if $F$ is vertical. 

\medskip
Informally speaking, the two above concepts, namely frontiers and rectangles, provide a template on how the search may progress.
However, due to the structure of a partial grid it may be possible that only certain nodes, but not all, that lie on a frontier are reached at some point of a search strategy. For this reason, our notation needs to be extended to subsets of nodes that lie on frontiers and the corresponding rectangles.
Let $F=\fr{x,y}{x',y'}$ be some frontier.
Any subset $C$ of nodes of $\grid$ that belong to $F$ is called a \emph{checkpoint}.
\textit{The $0$-th expansion of a checkpoint $C$} is $C$ itself and is denoted by $\expOfC{C}{0}$.
For $i \in \{1,\dots,\size\}$ we define \textit{an $i$-th expansion of $C$}, denoted by $\expOfC{C}{i}$, recursively as follows:
the set $\expOfC{C}{i}$ consists of all nodes $v\notin\expOfC{C}{0}\cup\expOfC{C}{1}\cup\cdots\cup\expOfC{C}{i-1}$ for which there exists a node $u \in \expOfC{C}{i-1}$, such that there exists a path between $v$ and $u$ in the subgraph of $\grid$ induced by nodes that lie on the rectangles $\rectOfF{F}{0},\rectOfF{F}{1}, \ldots, \rectOfF{F}{i}$.
Define
\[\expOfCplus{C}{i}=\expOfC{C}{0}\cup\ldots\cup\expOfC{C}{i}, \quad i\in\{0,\ldots,\size\}.\]

Informally, $\expOfC{C}{i}$ consists of only those nodes that belong to the rectangle $\rectOfF{F}{i}$ that are connected to nodes of $C$ by paths that lie `inside' of $\rectOfF{F}{i}$ --- this definition captures the behavior of searchers (in our algorithm) that guard the nodes of $C$ and `expand' from $C$ in all directions: then possible nodes that belong to any of the rectangles $\rectOfF{F}{0},\rectOfF{F}{1}, \ldots, \rectOfF{F}{i}$ but do not belong to $\expOfCplus{C}{i}$ will not be reached by the searchers.
See Figure \ref{fig:expansions} for an exemplary checkpoint with its expansions.

\begin{figure}[htb]
\begin{center}
\includegraphics[scale=0.9]{figs/fig-expans.pdf}
\caption{Exemplary expansions of a checkpoint $C$ (here $\size=9$); \textit{crosses} denote $C = \expOfC{C}{0}$, \textit{gray area} covers nodes that belong to $\expOfCplus{C}{3}$, \textit{empty squares} denote nodes in $\expOfC{C}{4}$ and \textit{dark squares} denote the one that need to be guarded provided that the gray area consists of the clean nodes.}
\label{fig:expansions}
\end{center}
\end{figure}

\section{Lower bound} \label{sec:lower}

\newcommand{\cL}{\mathcal{L}}

First note that a regular
$\sqrt{n}\times \sqrt{n}$ grid requires $\Omega(\sqrt{n})$ searchers even in the offline setting \cite{EllisW08}, that is, when the network is know in advance and the searchers may decide on the location of the homebase.
Therefore, our distributed algorithm is asymptotically optimal with respect to this worst case measure.

What we would also like to obtain is a lower bound expressed as a competitive ratio, which is defined as a maximized over all networks and all starting nodes ratio between number of searchers that a given algorithm uses and the search number that is optimal for a given network in an offline settings.
In other words, for any distributed algorithm $A$, let $A(G,h)$ be the number of searchers that it uses to clean a network $G$ starting from the homebase $h$ and let $\smash{\displaystyle A(G)=\max_{h}{A(G,h)}}$.
We aim at proving that for \emph{each} distributed algorithm $A$ there exists an $n$-node partial grid network $G$ such that $A(G)/\mcs{G}=\Omega(\sqrt{n}/\log n)$.\footnote{We remark that we define this competitive ratio by taking the worst case homebase for $A$ and in the definition of $\mcs{G}$ the most favorable homebase is selected. However, we note that this does not weaken the result of this section as, informally speaking, one may take two copies of each grid obtained in this section, rotate one copy by 180 degrees and merge the two copies at their homebases. Then, we obtain that for each choice of the homebase any algorithm is forced to use $\Omega(\sqrt{n})$ searchers for some grids since in one copy the search is conducted as in our following analysis.}

Define a class of partial grids
\[\cL=\bigcup_{l\geq 0}\cL_l,\]
where $\cL_l$ for $l\geq 0$ is defined recursively as follows.
We take $\cL_0$ to contain one network that is a single node located at $(0,0)$.
Then, in order to describe how $\cL_{l+1}$ is obtained from $\cL_l$, $l\geq 0$, we introduce an operation of \emph{extending $G\in\cL_l$ at $i$}, for $i\in\{0,\ldots,l\}$.
In this operation, first take $G$ and add $l+2$ new nodes located at coordinates:
\[(0,l+1),(1,l),\ldots,(j,l+1-j),\ldots,(l+1,0).\]
Call these coordinates the \emph{$(l+1)$-th diagonal}.
For each $j\in\{0,\ldots,i\}$ add an edge connecting the nodes $\ver{j}{l-j}$ and $\ver{j}{l-j+1}$, and for each $j\in\{i,\ldots,l\}$ add an edge connecting the nodes $\ver{j}{l-j}$ and $\ver{j+1}{l-j}$.
Then, obtain $\cL_{l+1}$ as follows: initially take $\cL_{l+1}$ to be empty and then for each $G\in\cL_l$ and for each $i\in\{0,\ldots,l\}$, obtain a network $G'$ by extending $G$ at $i$ and add $G'$ to $\cL_{l+1}$. Notice here that a graph constructed this way is not only a partial grid, but also a tree.

Figure~\ref{fig:lower} shows a network that was obtained from the corresponding network in $\cL_7$ by extending it at $6$.
\begin{figure}[htb]
\begin{center}
\includegraphics[scale=1]{figs/fig-lower.pdf}
\caption{A network from $\cL_8$}
\label{fig:lower}
\end{center}
\end{figure}

For a network $G\in\cL_l$, $l\geq 0$, we define a \emph{characteristic sequence of $G$}, $\sigma(G)$, as follows.
If $l=0$, then the characteristic sequence of $G$ is empty.
If $l>0$, then take the network $G'$ such that $G$ has been obtained by extending $G'$ at $i$.
Then, $\sigma(G)=(\sigma(G'),\ver{i}{l-i-1})$ is the characteristic sequence of $G$.
Note that the characteristic sequence uniquely defines the corresponding network.
The network introduced in Figure~\ref{fig:lower} has characteristic sequence $(\ver{0}{0}$, $\ver{1}{0}$, $\ver{1}{1}$, $\ver{0}{3}$, $\ver{3}{1}$, $\ver{2}{3}$, $\ver{1}{5}$, $\ver{6}{1})$.

\begin{lemma} \label{lem:LowerBound}
For any integer $l$ and for each distributed algorithm $A$ computing a connected monotone  search strategy there exists $G\in\cL_{l}$ such that for homebase $\ver{0}{0}$ we have $A(G,\ver{0}{0})\geq (l+1)/2$.
\end{lemma}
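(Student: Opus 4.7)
I plan to prove the lemma by induction on $l$, with the adversary constructing $G\in\cL_l$ online as $A$ executes. The base cases $l\in\{0,1\}$ are immediate since $(l+1)/2\leq 1$ and any searcher algorithm needs at least one searcher to operate.

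For the inductive step, the adversary follows this rule: the value of $i_k$ is not committed until the first instant $A$ moves a searcher onto some node on diagonal $k$; at that instant, writing $v_k$ for that node, the adversary sets $i_k=x(v_k)$, so that $v_k$ becomes the branching point on diagonal $k$. This commitment is legal because $i_k$ only controls the edges between diagonals $k$ and $k+1$, none of which is visible to $A$ earlier (the port labels at a node on diagonal $k$ reflect exactly the $\{1\text{ or }2\}$-child status dictated by $i_k$). Under this rule, every $v_k$ for $k\in\{0,\ldots,l-1\}$ becomes a degree-$3$ node with two unexplored children on diagonal $k+1$ that $A$ must eventually visit.

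Now consider the first neighbor of $(0,0)$ that $A$ enters; by symmetry of the construction assume it is $v=(0,1)$. The sub-tree $T_v$ of $G$ rooted at $v$ and not containing $(0,0)$ is, after translation by $-(0,1)$, combinatorially indistinguishable from an arbitrary element of $\cL_{l-1}$ with $v$ playing the role of the origin, and the remaining parameters $i_1,\ldots,i_{l-1}$ of the adversary correspond bijectively to the characteristic sequence of the induced $G'\in\cL_{l-1}$. The restriction of $A$ to moves that remain within $T_v$ acts as a valid distributed algorithm $A'$ on $G'$ starting at $v$. Applying the inductive hypothesis to $A'$, the adversary can choose $i_1,\ldots,i_{l-1}$ so that, at some moment $t^*$, at least $l/2$ searchers of $A'$ are simultaneously present within $T_v$. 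At $t^*$ the vertex $(1,0)$ is still unvisited (if not, we rerun the argument on the other subtree), so the contaminated edge $(0,0)$--$(1,0)$ is still present, and in order not to recontaminate the clean edge $(0,0)v$ at least one searcher must stand on the path from $(0,0)$ to $v$ outside of $T_v$. Adding this extra guard to the $l/2$ searchers inside $T_v$ yields a total of at least $l/2+1=(l+2)/2\geq(l+1)/2$ searchers, completing the induction.

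The main obstacle I expect is handling the case in which $A$ interleaves its exploration of both subtrees of $(0,0)$ — in particular visiting $(1,0)$ before completing the exploration of $T_v$. This is addressed by observing that the adversary's choices $i_1,\ldots,i_{l-1}$ for one subtree do not restrict the choices for the other (since on each diagonal the branching point lies in exactly one subtree), so the inductive hypothesis can be applied independently to whichever subtree is "dominant" at the chosen moment $t^*$. The formal step is to select $t^*$ as the first time at which one of the two subtree restrictions of $A$ attains frontier size $\lceil l/2\rceil$ inside its subtree, and then to invoke the guard argument with the roles of the two subtrees swapped if necessary.
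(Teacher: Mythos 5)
Your adversary is the right one, and it is the same as the paper's: commit the branching node of each diagonal to be the first node of that diagonal that the algorithm touches. The gap is in the inductive decomposition. Under this rule the subtree $T_v$ hanging off $v=\ver{0}{1}$ is \emph{not} ``combinatorially indistinguishable from an arbitrary element of $\mathcal{L}_{l-1}$'': a member of $\mathcal{L}_{l-1}$ has exactly one node with two children on each of its diagonals $0,\ldots,l-2$, whereas the branching node of diagonal $k$ of $G$ lands in $T_v$ only if the algorithm happens to reach diagonal $k$ first through $T_v$. Concretely, after entering $\ver{0}{1}$ the algorithm may return to the root, visit $\ver{1}{0}$ and then $\ver{2}{0}$, making $\ver{2}{0}$ the branching node of diagonal $2$, so $T_v$ has no branch at its first level. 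In general the $l-1$ remaining branchings are split between the two subtrees of the root according to the exploration order; an algorithm that alternates leaves each subtree with only about half of them, such a subtree then has only about $l/2$ leaves, and the inductive hypothesis (which presupposes membership in $\mathcal{L}_{l-1}$) simply does not apply to it. Your closing remark about applying the hypothesis to the ``dominant'' subtree does not repair this, because neither subtree need lie in $\mathcal{L}_{l-1}$, and a subtree with $k$ branchings can at best be forced to hold roughly $(k+1)/2$ searchers. The ``$+1$ guard at the root'' step is also not robust: if all the branching ends up on one side, the other subtree is a bare path that the algorithm can clear completely before time $t^*$, after which no searcher is needed at $\ver{0}{0}$ and the induction yields only $l/2<(l+1)/2$.

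The paper avoids recursion entirely and you should too. Wait for the move $m_l$ in which the algorithm first touches the last diagonal. At that moment every branching node has already been visited, hence (by connectedness and monotonicity in a tree) the entire path from the root to each branching node is clean; consequently no contaminated component contains a branching node, so each of the $l+1$ leaves sits at the end of its own contaminated path whose attachment point to the clean subtree must be occupied by a searcher, and an occupied node, having at most two children, can serve as the attachment point of at most two such paths. This yields $(l+1)/2$ directly, with no need to control what the two subproblems at the root look like.
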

\begin{proof}
Consider any algorithm $A$ producing a connected monotone search strategy.
Run $A$ for each network in $\cL_l$ with the homebase $\ver{0}{0}$.
Note that for each network in $\cL_l$, there exist distinct moves $m_1,\ldots,m_l$ such that till the beginning of move $m_j$, $j\in\{1,\ldots,l\}$, no node on the $j$-th diagonal has been occupied by a searcher and at the end of $m_j$ some node $\ver{x_j}{y_j}$ of the $j$-th diagonal is occupied by a searcher.
Consider $G\in\cL_l$ such that $\sigma(G)=(\ver{0}{0},\ver{x_1}{y_1},\ldots,\ver{x_{l-1}}{y_{l-1}})$.
Informally speaking, whenever the algorithm reaches for the first time a node $\ver{i}{j-i}$ in the $j$-th diagonal, an \emph{adversary} decides to extend at $i$ the network explored so far, thus always forcing the situation that the first node reached on a diagonal is of degree three.

Note that at the beginning of move $m_j$, $j\in\{1,\ldots,l\}$, no node of the $j$-th diagonal has been reached by a searcher and the first $j$ nodes of the characteristic sequence have been reached by searchers.
Recall that $G$ is a binary tree.

We analyze the explored part of $\cL_l$ at the beginning of the move $m_l$.
All edges incident to the leaves in $\cL_l$ are contaminated at this point.
On the other hand, all nodes of the characteristic sequence have been visited by searchers till the end of the move $m_l-1$.
Therefore, the contaminated subgraph of $\cL_l$ at this point is a collection of paths leading from nodes that are guarded to the leaves.
Since there are $l+1$ leaves in $\cL_l$, there are $l+1$ such paths, each such path needs to have a searcher placed at one of its endpoints (the one that is not a leaf in $\cL_l$) and, by construction of $\cL_l$, any searcher can be present on at most two such endpoints.
Thus, at least $(l+1)/2$ nodes need to be occupied by searchers, as required by the lemma.
\end{proof}

\begin{theorem} \label{thm:LowerBound}
For each distributed algorithm $A$ computing a connected monotone search strategy there exists an $n$-node network $G$ with homebase $h$ such that
\[\frac{A(G,h)}{\mcs{G}}=\Omega(\sqrt{n}/\log n).\]
\end{theorem}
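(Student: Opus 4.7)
The plan is to combine Lemma~\ref{lem:LowerBound} with the observation that every graph in $\cL$ admits a monotone connected search strategy using only $O(\log n)$ searchers.

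First, I would apply Lemma~\ref{lem:LowerBound} to the given algorithm $A$: for every integer $l\geq 1$ there exists $G\in\cL_l$ with $A(G,\ver{0}{0})\geq (l+1)/2$. The $k$-th diagonal of $G$ contributes $k+1$ nodes, so
\[|V(G)| \;=\; \sum_{k=0}^{l}(k+1) \;=\; \frac{(l+1)(l+2)}{2}.\]
Setting $n=|V(G)|$ therefore gives $l+1=\Theta(\size)$ and $A(G,\ver{0}{0})=\Omega(\size)$.

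Second, I would establish $\mcs{G}=O(\log n)$. From the recursive construction, every node of $G$ has at most one neighbour on the preceding diagonal and at most two on the following one, so $G$ is a tree of maximum degree three rooted at $\ver{0}{0}$. A monotone connected search strategy from $\ver{0}{0}$ that uses $O(\log n)$ searchers is obtained by the classical heavy-child recursion: to clear a subtree rooted at $v$, first process the smaller child's subtree (which has size at most half of the current subtree) while keeping one extra searcher at $v$ as a guard, and only then descend into the heavier child without that guard. Unwinding along the heavy path, the number of simultaneously active searchers satisfies $s(T)\le 1+s(T')$ with $|T'|\le|T|/2$, hence $\mcs{G}\le s_{\ver{0}{0}}(G)=O(\log n)$.

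Combining the two estimates with $h=\ver{0}{0}$ yields
\[\frac{A(G,h)}{\mcs{G}} \;\geq\; \frac{(l+1)/2}{O(\log n)} \;=\; \Omega\!\left(\frac{\size}{\log n}\right),\]
which is the claimed bound (for $n$ not of the form $(l+1)(l+2)/2$ one may attach a short pendant to $G$ without affecting the asymptotics).

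The hardest part will be Step~2: one must verify that the light/heavy recursion can actually be realised as a \emph{connected monotone} strategy rather than merely as a node-search or a disconnected one. The key invariant is that at every moment the clean region is a connected subtree whose boundary in $G$ consists of a single node, namely the node $v$ currently being guarded during a light phase (or the root of the current heavy subproblem between phases); because $\Delta(G)\le 3$ this boundary is indeed a single vertex, so exactly one extra searcher per level of nested light descents is enough. Since each such descent at least halves the remaining subtree, the depth of nesting is $O(\log n)$, which yields the desired upper bound on $\mcs{G}$ and closes the argument.
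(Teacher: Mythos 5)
Your proof is correct and follows the same route as the paper: combine Lemma~\ref{lem:LowerBound} (which gives $A(G,\ver{0}{0})\geq (l+1)/2=\Omega(\sqrt{n})$, since $|V(G)|=(l+1)(l+2)/2$ for $G\in\cL_l$) with the observation that every network in $\cL$ is a tree and hence satisfies $\mcs{G}=O(\log n)$. The only difference is that the paper obtains the $O(\log n)$ bound for trees by citing known results, whereas you re-derive it via the light/heavy recursion; your derivation is sound (one guard per level of nested light descents and $O(\log n)$ levels of nesting), although the phrase ``the boundary consists of a single node'' is imprecise --- during nested light phases there is one guarded boundary node \emph{per nesting level}, which is exactly what your per-level accounting then charges for.
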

\begin{proof}
Observe that each network $G$ in $\cL$ is a tree and therefore $\mcs{G}=O(\log(n))$, $n=|V(G)|$ \cite{BFFFNST12,MegiddoHGJP88}.
The theorem follows hence from Lemma~\ref{lem:LowerBound} and the fact that the length of characteristic sequence of each network in $\cL_l$ is $\Omega(\sqrt{n})$.
\end{proof}

\section {The algorithm} \label{sec:algorithm}
In this section we describe our algorithm that takes an upper bound on the size of the network as an input.
Section~\ref{sec:alg-init} deals with the initialization performed at the beginning of the algorithm.
Then, Section~\ref{sec:alg-procedures} introduces two procedures used by the algorithm and finally Section~\ref{sec:alg-main} states the main algorithm.
After each move performed by searchers, each searcher that occupies a node that does not need to be guarded is said to be \emph{free}.
Each node that needs to be guarded is occupied by at least one searcher; if more searchers occupy such node then all of them except for one are also \emph{free}.
If, at some point, no node of the last expansion of some checkpoint need to be guarded, then we say that the expansion is \emph{empty}.

\subsection{Initialization} \label{sec:alg-init}

We start presenting our algorithm by describing initial conditions. Recall that the origin $\ver{0}{0}$ of the two-dimensional xy coordinate system is situated in the homebase. The initial checkpoint $C_0$ is the set of nodes of the connected component of $\grid[F]$ that contains $h$, where $F$ is the homebase frontier. Thus, initially $|C_0|$ searchers place themselves on all nodes of $C_0$ (note that the nodes of $C_0$ induce a path in $\grid$). See Figure \ref{fig:frontier0} for an example.

\begin{figure}[htb]
\begin{center}
\includegraphics[scale=1]{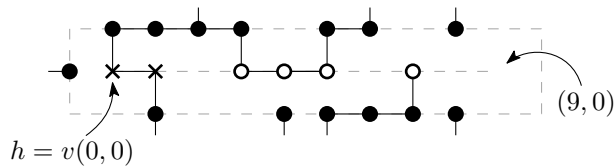}
\caption{Exemplary initialization for $\size = 9$; \textit{crosses} denote nodes belonging to the initial checkpoint $C_0$ and \textit{empty circles} denote nodes that belong to the homebase frontier, but do not fall into $C_0$.}
\label{fig:frontier0}
\end{center}
\end{figure}

\subsection{Procedures} \label{sec:alg-procedures}

\subsubsection{Procedure $\ProcClean$}
We start with an informal description of the procedure.
When a new checkpoint $C$ has been reached, our search strategy `expands' from $C$ by successively cleaning subgraphs $\grid[\expOfCplus{C}{i}]$ for $i\in\{1,\ldots,\size\}$.
Once all nodes in $\expOfCplus{C}{i-1}$ are clean for some $0<i\leq\size$, the transition to reaching the state in which all nodes in $\expOfCplus{C}{i}$ are clean requires cleaning all nodes of the $i$-th expansion of $C$.
This is done by calling for every guarded node $u$ from $\expOfCplus{C}{i-1}$ a special procedure ($\ModProcConnSearch$, described below), which cleans nodes which belong to $\expOfC{C}{i}$ and `has access' to them from $u$.
Procedure $\ProcClean$ performs this job by using $O(\size)$ searchers.

For cleaning all nodes of the $i$-th expansion of $C$, provided that $\grid[\expOfCplus{C}{i-1}]$ is clean we will use a procedure from \cite{BorowieckiDK15} which is more general but for our purposes can be stated as follows.
\begin{theorem}[\cite{BorowieckiDK15}] \label{thm:sense-of-dir}
Let $F$ be any frontier and let $G'$ be any connected grid with nodes lying on rectangles $\rectOfF{F}{0},\rectOfF{F}{1}, \ldots,\rectOfF{F}{i}$, $i\geq 0$.
There exists a distributed procedure $\ProcConnSearch$ that, starting at an arbitrarily chosen homebase in $G'$, clears $G'$ in a connected and monotone way using $6i+4$ searchers.
\qed
\end{theorem}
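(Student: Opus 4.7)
The plan is to exploit the fact that $G'$ lies inside $\rectOfF{F}{i}$, whose perpendicular extent (orthogonal to $F$) is only $2i+1$, while its extent parallel to $F$ can be as long as $|F|+2i$. Any slice perpendicular to $F$ therefore meets $G'$ in at most $2i+1$ nodes, which naturally suggests a slice-by-slice sweep along the direction parallel to $F$. The port labeling that comes with the partial grid (left/right/up/down) gives the searchers a common sense of direction, so they can agree on what "parallel to $F$" means and execute a coordinated sweep without first having to map out the graph.

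First I would treat the simpler situation in which the homebase $h$ lies on a slice at one end of $G'$. The searchers maintain a \emph{guarding wall}: a perpendicular slice whose nodes separate the already-cleared part of $G'$ from the part still to be explored. The wall contains at most $2i+1$ nodes, and a small constant additional team of searchers probes the next slice using local port information, handling the missing nodes and edges of a partial grid. Once the next slice is mapped and secured, the guards slide forward one column and the exploring searchers become free. A careful accounting shows that the one-ended sweep can be carried out by $3i+2$ searchers while preserving monotonicity and connectivity.

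For a general homebase the sweep is split into two phases meeting at the slice through $h$. Phase~1 sweeps toward one end, using $3i+2$ searchers; at the end of the phase, at most $2i+1$ searchers are frozen as a guarding wall on the slice containing $h$, and the remaining searchers retreat through the clean region. Phase~2 then sweeps in the opposite direction, starting from that frozen wall as its initial boundary. During the transitional moment when the advancing wall of Phase~2 has to coexist with the frozen wall, the two walls share the boundary slice, which, after careful overlap and reuse of the searchers, bounds the total budget by $3i+2 + 3i+2 = 6i+4$. Connectedness and monotonicity follow from the invariant maintained throughout: at every moment the current guards separate a connected clean subgraph from the contaminated subgraph, and no clean edge is ever abandoned to recontamination.

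The main obstacle I anticipate is the distributed, partial-information nature of the slice-advance step. The searchers do not know in advance which of the up to $2i+1$ possible positions in the next slice are actually present in $G'$, nor which edges to them exist. They must probe the new slice while (i) never leaving a clean node with a contaminated neighbor unguarded, and (ii) keeping the clean subgraph connected so that freed searchers can still be routed back through it. Designing the exploration so that the transient number of active searchers during a single slice advance stays inside the stated budget—especially at the moment when the inherited and the advancing walls are simultaneously present—is the delicate part of the construction and the place where the constant $6i+4$, rather than something smaller like $2i+O(1)$, is forced.
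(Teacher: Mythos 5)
First, a point of comparison: the paper does not actually prove Theorem~\ref{thm:sense-of-dir} --- it is imported as a black box from \cite{BorowieckiDK15} (hence the immediate \textit{qed}), so your attempt is really a reproof of that external result and has to be judged on its own merits. Your geometric setup is correct: the union of $\rectOfF{F}{0},\ldots,\rectOfF{F}{i}$ is (for horizontal $F$) a strip of $2i+1$ rows, so $G'$ is an arbitrary connected partial grid of height $2i+1$, every perpendicular slice meets it in at most $2i+1$ nodes, and guarding an entire slice does separate the two sides. The two-phase reduction to a one-ended sweep and the budget arithmetic are also unobjectionable.

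The gap is the one-ended sweep itself, which is where all the content of the theorem lives. You assert that a guarding wall of at most $2i+1$ nodes plus a constant-size probing team --- $3i+2$ searchers --- can advance slice by slice, with ``a careful accounting'' left unstated. But when the wall sits on column $c$, a node of $G'$ in column $c+1$ need not be adjacent to anything clean: in a partial grid it may be reachable from the clean region only by a detour that first runs through columns $c+2,c+3,\ldots$ and doubles back. While such a detour is being cleared, the set of clean nodes with contaminated neighbours is no longer confined to two adjacent columns; a single detour is a long path whose clean prefix can expose boundary nodes in every column it traverses, and several such pockets can be pending simultaneously (the same issue recurs in your Phase~2, where the leftward sweep must re-enter columns right of the homebase to clear parts that were unreachable in Phase~1, so the boundary is not just ``advancing wall plus frozen wall''). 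Nothing in your argument bounds this transient boundary by $3i+2$, or indeed by any explicit linear function of $i$. The factor $3$ per row in the true bound $6i+4=3(2i+1)+1$ is exactly the price of these detours, and exhibiting an expansion order together with a charging argument that keeps the boundary that small is the entire substance of \cite{BorowieckiDK15}; your final paragraph names this obstacle accurately but does not overcome it.
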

Note that while using procedure $\ProcConnSearch$, we will be cleaning a subgraph of $\grid[\expOfC{C}{i}]$ that is embedded into the entire partial grid and thus some nodes $v$ of $\grid[\expOfC{C}{i}]$ have edges leading to neighbors that lie outside of $\grid[\expOfC{C}{i}]$.
If such an edge is already clean, then no recontamination happens for the node $v$ and moreover no searcher used by $\ProcConnSearch$ for the subgraph of $\grid[\expOfC{C}{i}]$ needs to stay at $v$.
On the other hand, if such an edge is contaminated (and thus not reached yet by our search strategy), then $v$ needs to be guarded and for that end we place an extra searcher on it that guards $v$ during the remaining execution of $\ProcConnSearch$.
Note that in the latter case, the node $v$ belongs to $\rectOfF{F}{i}$, where $F$ is the frontier that contains the nodes of $C$ and therefore there exist $O(\size)$ such nodes $v$.
In other words, $\ProcConnSearch$ is called to clean a certain subgraph contained within $\rectOfF{F}{i}$ and whenever a node on the rectangle $\rectOfF{F}{i}$ has a contaminated edge leading outside of the rectangle $\rectOfF{F}{i}$, then an extra searcher, no accommodated by $\ProcConnSearch$ in Theorem~\ref{thm:sense-of-dir}, is introduced to be left behind to guard $v$.
The modification of $\ProcConnSearch$ that leaves behind a searcher on each such newly reached node of $\rectOfF{F}{i}$ will be denoted by $\ModProcConnSearch$.
Note that this procedure is invoked for every guarded node from $\expOfCplus{C}{i-1}$ in order to clean $\expOfC{C}{i}$, see Figure \ref{fig:cleandeep} for an example.
\begin{figure}[htb]
\begin{center}
\includegraphics[scale=1]{figs/fig-cleandeep.pdf}
\caption{Example of an execution of procedure $\ProcClean$; \textit{crosses} denote $C = \expOfC{C}{0}$, \textit{empty circles} denote nodes that belong to $\expOfCplus{C}{1}$, \textit{dark squares} denote the one that belongs to $\expOfCplus{C}{1}$ and for which procedure $\ModProcConnSearch$ is invoked, \textit{gray areas} show nodes that will be cleaned in four calls of $\ModProcConnSearch$ in order to clean $\expOfC{C}{2}$. Note that there are some nodes that belongs to $\expOfCplus{C}{1}$ and are guarded at first, but after one of the calls of $\ModProcConnSearch$ there is no need to guard them any more, so the procedure is not invoked for them.}
\label{fig:cleandeep}
\end{center}
\end{figure}
It follows that it is enough to provide as an input to $\ModProcConnSearch$: a node in $\expOfCplus{C}{i-1}$ that plays the role of homebase for $\ModProcConnSearch$, the frontier $F$ and $i$.
We note that each checkpoint constructed in our final algorithm is obtained as follows: some frontier $F$ is selected and then a checkpoint $C$ is created as some set of nodes that belong to $F$; thus we assume that with $C$ is associated such a unique frontier $F$.

Thus, this approach guarantees us using at most $6i+4$ searchers to clean $\grid[\expOfC{C}{i}]$ and $2\size + 8i$ searchers for guarding nodes laying on $\rectOfF{F}{i}$, which will be analyzed in more details in Section~\ref{sec:proof}. 

To summary, we give a formal statement of our procedure.

  \begin{algorithm}
   \caption{$\ProcClean$}
   \begin{algorithmic}
     \Require An expansion $\expOfC{C}{i-1}$ with $C$ contained in the frontier $F$,  $i\geq 1$.
     \Ensure  Cleaning all nodes of $\expOfC{C}{i}$.
       \While{there exists a node $v\in \expOfC{C}{i-1}$ with a contaminated neighbor $u$ in 					$\expOfC{C}			{i}$}
       \State Place $6i+4$ free searchers on $v$.
        \State Call $\ModProcConnSearch$ for $v$ as the homebase, frontier $F$ and integer $i$.
      \EndWhile
   \end{algorithmic}
  \end{algorithm}

\subsubsection{Procedure $\ProcUpgrade$}

Our algorithm maintains a collection $\checkpoints$ of currently used checkpoints.
Note that $\rectOfF{F}{\size}$, where $F$ is some frontier, contains $10$ frontiers.
Thus, reaching the $\size$-th expansion of a checkpoint of $F$ provides a possibility of creating one new checkpoint for each of the above frontiers.
Procedure $\ProcUpgrade$ generates these new checkpoints, adds them to $\checkpoints$ and removes $C$ from $\checkpoints$.
Also, if it happens that some newly constructed checkpoint belongs to the same frontier as some existing checkpoint in $\checkpoints$ and no expansion for the existing one has been performed yet, then both checkpoints are merged into one.
Finally, any checkpoint in $\checkpoints$, whose lastly performed expansion is empty, is removed from $\checkpoints$.
We remark that only procedure $\ProcUpgrade$ modifies the collection of checkpoints $\checkpoints$ and this procedure performs no cleaning moves.

  \begin{algorithm}
   \caption{$\ProcUpgrade$}
   %\label{alg:upgrade}
   \begin{algorithmic}
     \Require $\expOfC{C}{\size}$ and the collection of all checkpoints $\aC$
     \Ensure Updated collection $\checkpoints$
     \State  $\checkpoints\leftarrow\checkpoints\setminus\{C\}$
     \State  $\aC_{new}\leftarrow\emptyset$
     \ForAll{frontier $F$ on $\size$-th rectangle of the frontier containing $C$}
        \State Let $C'$ consist of all guarded nodes in $F$.
        \State If $C'\neq\emptyset$, then $\aC_{new}\leftarrow\aC_{new}\cup\{C'\}$.
     \EndFor
     \ForAll {$C''$ in $\checkpoints$}
     	\If {there exists $C'\in\aC_{new}$ that is a subset of the same frontier as $C''$}
           \If {$C''$ is in $0$-th expansion}
            	\State $\aC_{new}\leftarrow\aC_{new}\setminus\{C'\}$
                \State Replace $C''$ with $C''\cup C'$ in $\checkpoints$.
           \EndIf
        \EndIf
      \EndFor
      \State $\aC \leftarrow \aC \cup \aC_{new}$
      \For {each $C$ in $\aC$}
      	\If {no node in the last expansion of $C$ is guarded}
        	\State $\checkpoints \leftarrow \checkpoints\setminus\{C\}$
        \EndIf
      \EndFor
   \end{algorithmic}
  \end{algorithm}

Thus, to summary, the `lifetime' of a checkpoint is as follows.
A newly created checkpoint $C$ may change by getting more nodes only before its $1$-st expansion occurs.
This happens if for another checkpoint its $\size$-th expansion is performed and as a result some new nodes that belong to the same frontier as $C$ become guarded.
Once the $1$-st expansion of $C$ is performed, the checkpoint will remain in the collection $\checkpoints$ and possibly more expansions of $C$ are made (in total at most $\size$ expansion are possible for each checkpoint).
Finally, $C$ may disappear from $\checkpoints$ in two ways: either some expansion of $C$ becomes empty (then $C$ is not removed from $\checkpoints$ right away but during the subsequent call to $\ProcUpgrade$), or $C$ reaches its $\size$-th expansion and procedure $\ProcUpgrade$ is called for $C$ (in which case $C$ possibly `gives birth' to new checkpoints during the execution of $\ProcUpgrade$).

\subsection{Procedure \ProcCS} \label{sec:alg-main}
$\ProcCS$ is the main algorithm, whose aim is to clear the entire partial grid $\grid$ in the connected and monotone way. We start with an informal introduction of the algorithm. The search strategy it produces is divided into phases, which formally will be defined in the next chapter. In each step of the algorithm, the checkpoint with the highest number of nodes that need to be guarded is being chosen and the next expansion is being made on it. When one of the checkpoints reaches its $\size$-th expansion, then the current phase ends and the procedure $\ProcUpgrade$ is being invoked.
Thus, the division of the search strategy into phases is dictated by consecutive calls to procedure $\ProcUpgrade$.
For an expansion $C$, in the pseudocode below we write $\border{C}$ to refer to the set of nodes that belong to the last expansion of $C$ and need to be guarded at a given point.

  \begin{algorithm}
   \caption{$\ProcCS$}
   \label{alg:search}
   \begin{algorithmic}
     \Require An integer $n$ providing an upper bound on the size of the partial grid $\grid$.
     \Ensure A monotone connected search strategy for $\grid$.
     \State Perform the initialization (see Section~\ref{sec:alg-init}).
     \While {$\grid$ is not clean}
     	\While {no checkpoint has reached its $\size$-th expansion}
        	\State Let $C_{max}\in\checkpoints$ be such that $\border{C_{max}}\geq \border{C}$ for each $C\in\checkpoints$.
            \State Let $i$ be the number of expansions of $C_{max}$ performed so far.
        	\State Invoke $\ProcClean$ for $\expOfC{C_{max}}{i}$.
     	\EndWhile
     \State Invoke $\ProcUpgrade$ for $\expOfC{C_{max}}{\size}$ and $\checkpoints$.
     \EndWhile
   \end{algorithmic}
  \end{algorithm}
 
We close this chapter with giving examples of first three expansions of some checkpoint $C$, see Figure \ref{fig:check}, and showing how our algorithm clears an exemplary partial grid network, see Figure \ref{fig:alg}.
  \begin{figure}[htb]
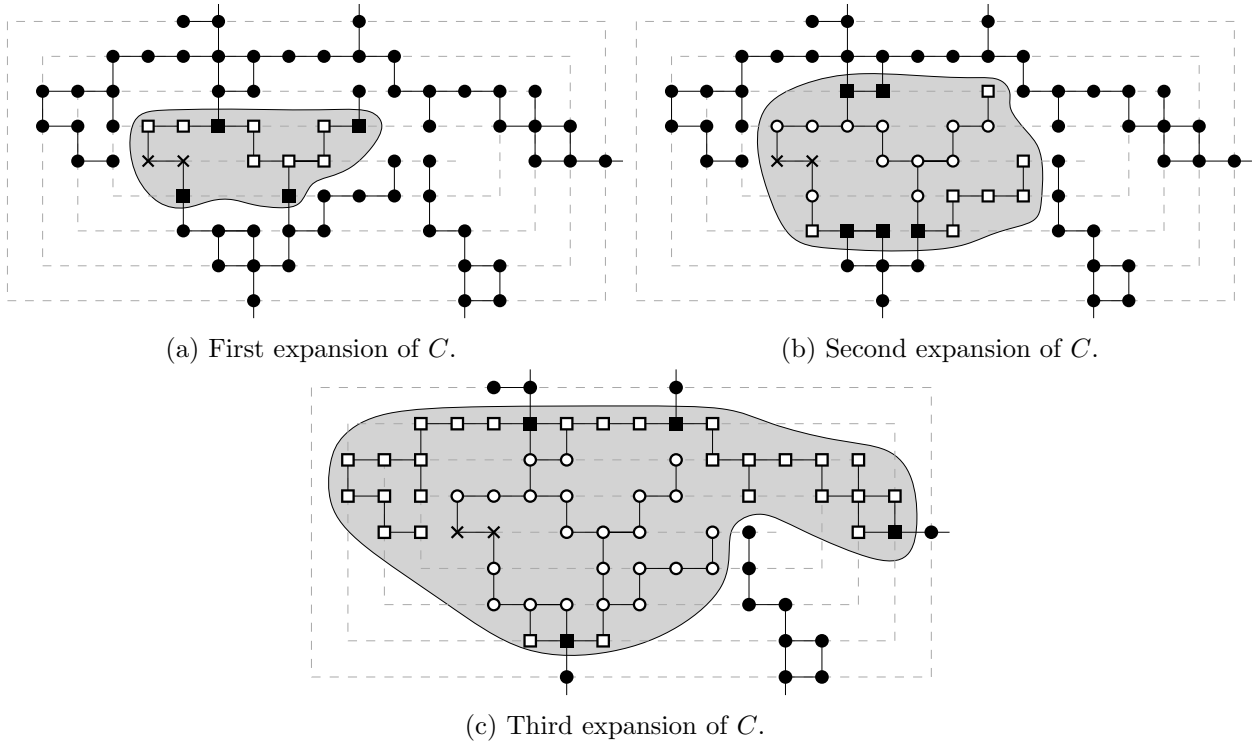

\centering
\begin{subfigure}{.5\textwidth}
  \centering
  \includegraphics[scale=0.82]{figs/fig-check1.pdf}
  \caption{First expansion of $C$.}
  \label{fig:check1}
\end{subfigure}~~
\begin{subfigure}{.5\textwidth}
  \centering
  \includegraphics[scale=0.82]{figs/fig-check2.pdf}
  \caption{Second expansion of $C$.}
  \label{fig:check2}
\end{subfigure}
\begin{subfigure}{.5\textwidth}
  \centering
  \includegraphics[scale=0.85]{figs/fig-check3.pdf}
  \caption{Third expansion of $C$.}
  \label{fig:check3}
\end{subfigure}

\caption{First three expansions for some checkpoint $C$ (here $\size=9$); \textit{crosses} denote $C = \expOfC{C}{0}$, \textit{empty circles} denote nodes cleaned in previous expansions; \textit{squares} denote nodes explored in current expansion; \emph{dark circles} are nodes not reached yet by the searchers; and \textit{dark squares} denote nodes that need to be guarded at the end of current expansion. \textit{Gray areas} show the clean part of the graph, i.e., $\expOfCplus{C}{i}$ for $i\in\{1,2,3\}$.}
\label{fig:check}
\end{figure}

  \begin{figure}
\centering
\begin{subfigure}{.45\textwidth}
  \centering
  \includegraphics[width=0.7\textwidth]{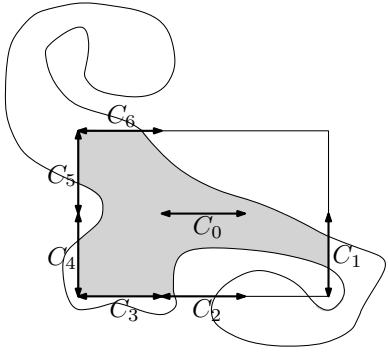}
  \caption{At the end of the first phase $C_0$ (initial checkpoint) reaches its $\size$-th expansion. Procedure $\ProcUpgrade$ creates $6$ new checkpoints and removes $C_0$ from $\checkpoints$, i.e. $\checkpoints = \{ C_1, C_2, C_3, C_4, C_5, C_6\}$.}
  \label{fig:alg1}
\end{subfigure}~~~~
\begin{subfigure}{.45\textwidth}
  \centering
  \includegraphics[width=0.7\textwidth]{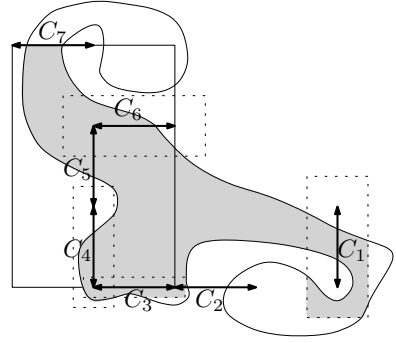}
  \caption{$\size$-th expansion of $C_5$ ends the second phase. Checkpoints $C_4$ and $C_6$ are removed from $\checkpoints$ because (in our example) there is no need to guard any node on theirs expansions; $\checkpoints = \{ C_1, C_2, C_3, C_7\}$.}
  \label{fig:alg2}
\end{subfigure}
% leave a blank line to change row         

\begin{subfigure}{.45\textwidth}
  \centering
  \includegraphics[width=0.7\textwidth]{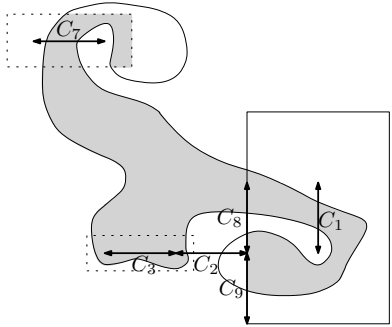}
 \caption{$C_1$ ends the third phase. Notice that a new checkpoint $C_8$ emerged `inside' already cleaned area by $C_0$; $C_2$ is removed from $\checkpoints$ even if $\size$-th expansion has not been reached but its last expansion has no nodes to be guarded; $\checkpoints = \{C_7, C_8, C_9\}$.}
  \label{fig:alg3}
\end{subfigure}~~~~
\begin{subfigure}{.45\textwidth}
  \centering
  \includegraphics[width=0.7\textwidth]{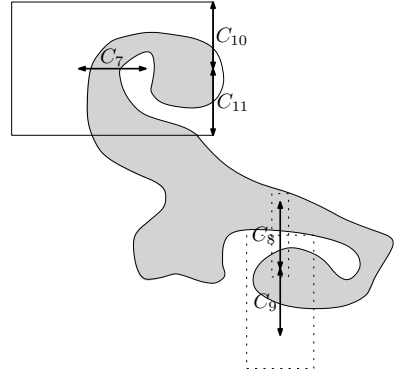}
  \caption{$C_7$ ends the fourth phase. Notice that a new checkpoint $C_{11}$ emerged on an edge of $C_5$'s $\size$-th expansion, but it could not be created in the second phase because then there was no access to the contaminated part; $\checkpoints = \{C_{10}, C_{11}\}$.}
  \label{fig:alg4}
\end{subfigure}

\begin{subfigure}{\textwidth}
  \centering
  \includegraphics[width=0.35\textwidth]{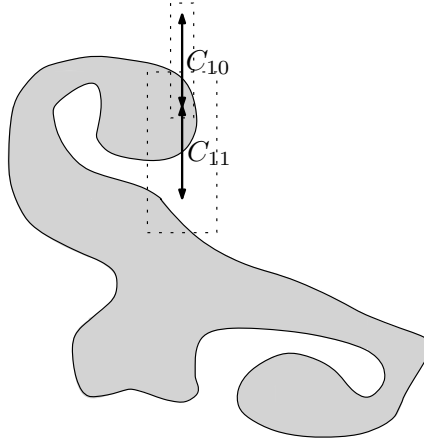}
  \caption{Last phase, in which the rest of the graph is cleaned.}
  \label{fig:alg5}
\end{subfigure}
\caption{Clearing an exemplary partial grid by procedure $\ProcCS$; \textit{gray areas} denote the clean part, \textit{arrows} denote
frontiers on which the marked checkpoints lie, \textit{dotted rectangles} around checkpoints denote their current expansions and \textit{solid rectangles} denote the $\size$-th expansions, which end phases.}
  \label{fig:alg}
 
 \end{figure}

\section {Analysis of the algorithm} \label{sec:proof}

By a \emph{step of the algorithm}, or simply a \emph{step}, we mean all searching moves performed during a single iteration of the internal `while' loop of procedure $\ProcCS$.
Thus, one step of the algorithm includes all moves produced by one call to procedure $\ProcClean$.
A \emph{phase} of an algorithm consists of all its steps between two consecutive calls to procedure $\ProcUpgrade$.
Note that phases may differ with respect to the number of steps they are made of.

We say that a checkpoint is \emph{present} in a given phase if its last expansion is not empty at the beginning of this phase, i.e., if this checkpoint belongs to $\checkpoints$ at the beginning of the phase. Similarly, a checkpoint is \emph{present} in a given step if it is present in the phase to which the step belongs to.
Thus, in particular, a checkpoint is present in none or in all steps of a given phase. Note that some checkpoints may have empty expansions during a part of a the phase, but they still remain present to the end of the phase; this assumption is made to simplify the analysis of the algorithm.

For the purposes of the next definition we say that, for a checkpoint $C$, a node $v$ and a step $t$, the checkpoint $C$ \emph{owns $v$ in step $t$} if:
\begin{itemize}
\item $v$ needs to be guarded at the beginning of step $t$ and $v$ belongs to the last expansion of $C$ performed till the end of step $t-1$, and
\item either this last expansion of $C$ occurred in step $t-1$, or $C$ owns $v$ in step $t-1$.
\end{itemize}
(Intuitively, if a node $v$ is reached by searchers in a step in which an expansion of $C$ occurred, then $C$ owns $v$ as long as $v$ is guarded.)
Given a checkpoint $C$ present in a step $t$, we write $\expansionOfC{C}{t}$ to denote the set of nodes that $C$ owns in step $t$.
The \emph{weight of a checkpoint} $C$ present in a step $t$ is $\weightOfC{C}{t} = |\expansionOfC{C}{t}|$.
Note that each guarded node is owned by exactly one checkpoint and hence, for a step $t$, the sum of weights of all checkpoints present in step $t$ equals the number of nodes that need to be guarded.

If a checkpoint $C$ is not present in a step $t$, then we take $\weightOfC{C}{t} = 0$.
The checkpoint $C_{max}$ selected in a step $t$ (see the pseudocode of Procedure~$\ProcCS$) is called \emph{active in step $t$}, or simply \emph{active} if the step is clear from the context or not important.
All other checkpoints present in this step are called \emph{inactive}.
We define an \emph{active interval} of a checkpoint $C$ to be a maximal interval $[t',t'']$ such that $C$ is active in all steps $t\in\{t',\ldots,t''\}$.

\subsection{Single phase analysis --- how weights of checkpoints evolve}

We now prove lemmas that characterize how the weight of a checkpoint changes over time --- see  Figure~\ref{fig:lifeline} for an exemplary lifeline of a checkpoint.
Informally, the weight of a checkpoint $C$ does not grow in intervals in which $C$ is inactive (Lemma~\ref{lem:inactive_notgrow}).
Also, the weight of $C$ at the end of an active interval is not greater than at the beginning of it
(Lemma~\ref{lem:active_notgrow}); however, no upper bounds except for the trivial one of $O(\size)$ can be concluded for the weight of $C$ inside its active interval.
\begin{figure}[htb]
\begin{center}
\includegraphics[width=0.9\textwidth]{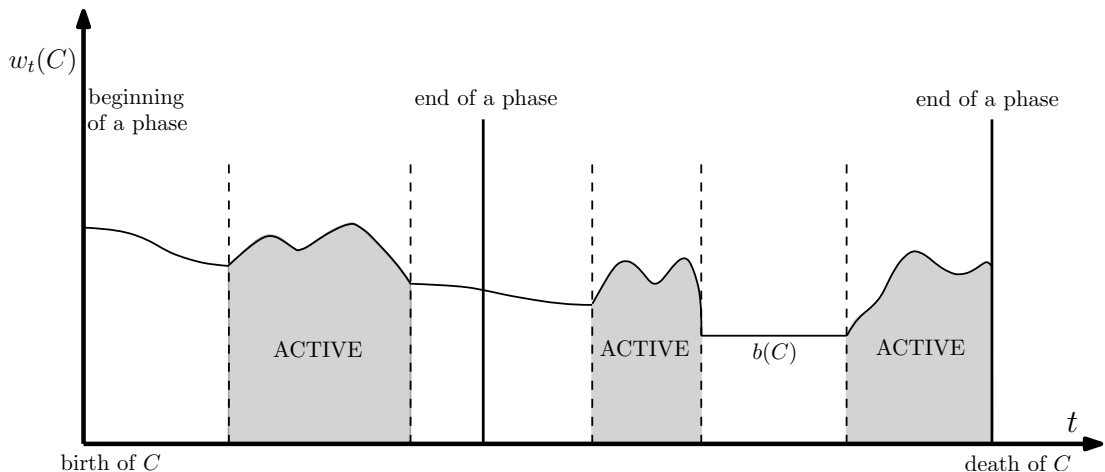}
%\vspace{4cm}
\caption{Exemplary lifeline of a checkpoint $C$.}
\label{fig:lifeline}
\end{center}
\end{figure}

\begin{lemma}\label{lem:inactive_notgrow}
If a checkpoint $C$ is present and inactive in a step $t$, then $\weightOfC{C}{t+1} \leq \weightOfC{C}{t}$.
\end{lemma}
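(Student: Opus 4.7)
The plan is to prove the stronger set containment $\expansionOfC{C}{t+1} \subseteq \expansionOfC{C}{t}$, from which the lemma follows immediately by taking cardinalities. The key observation driving the containment is that the recursive ownership definition refers to ``the last expansion of $C$ performed till the end of step $t-1$'' (respectively $t$), and when $C$ is inactive in step $t$ no expansion of $C$ is produced during that step, so this reference expansion coincides for the two indices.

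The argument then proceeds in two short steps. First I would note that the only searching moves executed during step $t$ come from the single invocation of $\ProcClean$ on the active checkpoint $C_{max} \neq C$; in particular no expansion of $C$ is performed in step $t$, and hence the last expansion $E$ of $C$ up to the end of step $t$ is the same as its last expansion up to the end of step $t-1$. Next I would pick any $v \in \expansionOfC{C}{t+1}$ and unpack the definition of ownership at step $t+1$: the first clause tells us $v$ is guarded at the start of step $t+1$, the second places $v$ in $E$, and the third demands that either $E$ was produced in step $t$, or $C$ already owned $v$ in step $t$. Since the first alternative is ruled out by the inactivity of $C$ in step $t$, the only way for $v$ to qualify is $v \in \expansionOfC{C}{t}$. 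This gives the containment, and taking sizes delivers the weight inequality.

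The proof is essentially a bookkeeping exercise once one is careful with the recursive definition, so the main obstacle is not mathematical depth but clean handling of the boundary cases. In particular I would check the situation where step $t$ is the last step of its phase: between steps $t$ and $t+1$ the procedure $\ProcUpgrade$ runs, but it performs no searching moves and so does not change which nodes need to be guarded, so the argument above still applies; and if $\ProcUpgrade$ happens to remove $C$ from $\checkpoints$ at that point, then by the convention fixed before the lemma we have $\weightOfC{C}{t+1} = 0$ and the inequality holds trivially. I would also sanity-check the corner in which $C$ is still in its $0$-th expansion, verifying that the ``last expansion'' terminology behaves as expected there; this is the one spot where the definitions need to be read with some care, but the conclusion is unaffected.
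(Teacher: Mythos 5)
Your proposal is correct and follows essentially the same route as the paper: both hinge on the observation that only the active checkpoint is expanded during a step, so no expansion of an inactive $C$ occurs in step $t$, and both then dispose of the end-of-phase cases by noting that $\ProcUpgrade$ performs no searching moves and can only zero out $C$'s weight (by removing it or by merging a $0$-th-expansion checkpoint into a new one). Your packaging via the containment $\expansionOfC{C}{t+1}\subseteq\expansionOfC{C}{t}$, read off directly from the second clause of the ownership definition, is a slightly crisper formalization of the same argument.
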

\begin{proof}
It follows directly from the definitions and procedure $\ProcClean$ that the only checkpoint on which an expansion is performed during execution of $\ProcClean$ is the active one. The weight of an inactive checkpoint $C$ can change only in the situation, where the active checkpoint in a step $t$ expands on some nodes that belong to the last expansion of $C$. In other words, the weight of $C$ may decrease if $C$ contains in step $t$ nodes that are added to the active checkpoint in step $t+1$.
Thus, if $t$ is not the last step of a phase, then the proof is completed.

If $t$ is the last step of some phase, then apart from procedure $\ProcClean$, procedure $\ProcUpgrade$ is being invoked, which affects $C$ in two situations:
\begin{itemize}
\item there exists a step $t'$ in the phase that ends such that $\weightOfC{C}{t'} = 0$. Then, because $C$ can not be expanded during steps $t',\ldots,t$ of the phase, we get directly that $\weightOfC{C}{t + 1} = \weightOfC{C}{t} = 0$. 
\item $C$ is in its $0$-th expansion and a new checkpoint is placed on the same frontier, which implies that $C$ is not present in step $t+1$ and thus $\weightOfC{C}{t + 1} = 0$.
\end{itemize}
Thus, in all cases we obtain that $\weightOfC{C}{t+1} \leq \weightOfC{C}{t}$.
\end{proof}

We next observe that, informally speaking, once a checkpoint becomes active, it remains active until either the phase ends or its weight decreases. Note that a checkpoint that is active in the last step of the phase is not present in the first step of the next phase, i.e. its weight is then zero, which allows us to state the lemma as follows:
\begin{lemma}\label{lem:active_notgrow}
Let $C$ be a checkpoint and let $[t',t'']$ be an active interval of $C$. Then, $\weightOfC{C}{t''+1}\leq\weightOfC{C}{t'}$.
\end{lemma}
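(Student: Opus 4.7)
The plan is to use Lemma~\ref{lem:inactive_notgrow} via an intermediate pivot checkpoint $C^{*}\neq C$, obtaining the desired inequality through a three-step comparison. I will first dispose of a trivial edge case: if $t''$ is the last step of its phase, then $C$ must have just completed its $\size$-th expansion in step $t''$, so the subsequent call to $\ProcUpgrade$ removes $C$ from $\checkpoints$. Therefore $C$ is not present in step $t''+1$ and, by the convention $\weightOfC{C}{t''+1}=0$, the inequality holds trivially.

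For the main case, step $t''+1$ belongs to the same phase as $t''$. By maximality of the active interval $[t',t'']$, the checkpoint $C$ is not active in step $t''+1$; since the grid is not yet fully cleared, $\ProcCS$ selects some other $C^{*}\in\checkpoints$ as active in step $t''+1$, and its selection rule immediately gives $\weightOfC{C^{*}}{t''+1}\geq\weightOfC{C}{t''+1}$. Because $\checkpoints$ is modified only by $\ProcUpgrade$ and $\ProcUpgrade$ is invoked strictly between phases, $C^{*}$ lies in $\checkpoints$ throughout the whole phase and is hence present in every step of $[t',t'']$. In each of those steps the unique active checkpoint is $C$, so $C^{*}$ is inactive there, and Lemma~\ref{lem:inactive_notgrow} applies to $C^{*}$ in each step $t\in[t',t'']$; iterating along $t=t'',t''-1,\ldots,t'$ yields $\weightOfC{C^{*}}{t''+1}\leq\weightOfC{C^{*}}{t'}$.

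To close the argument, I will use that $C$ was itself selected as active in step $t'$, which by the maximization rule of $\ProcCS$ gives $\weightOfC{C}{t'}\geq\weightOfC{C^{*}}{t'}$. Chaining the three inequalities produces
\[\weightOfC{C}{t''+1}\;\leq\;\weightOfC{C^{*}}{t''+1}\;\leq\;\weightOfC{C^{*}}{t'}\;\leq\;\weightOfC{C}{t'},\]
as required. I expect the only delicate step to be verifying that the pivot $C^{*}$ is present throughout the interval $[t',t'']$, so that Lemma~\ref{lem:inactive_notgrow} can be applied in every intermediate step; this rests on the facts that $\ProcUpgrade$ acts only at phase boundaries and that a checkpoint remains \emph{present} within a phase even when its current expansion is temporarily empty.
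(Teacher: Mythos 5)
Your proof is correct and follows essentially the same route as the paper's: both use the checkpoint $C^{*}$ that becomes active in step $t''+1$ as a pivot, apply Lemma~\ref{lem:inactive_notgrow} to it across the interval where it is inactive, and invoke the maximization rule of $\ProcCS$ at steps $t'$ and $t''+1$ (the paper phrases this as a proof by contradiction, while you chain the three inequalities directly, which is only a cosmetic difference). Your explicit justification that $C^{*}$ is present throughout $[t',t'']$ because $\ProcUpgrade$ acts only at phase boundaries is a welcome addition that the paper leaves implicit.
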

\begin{proof}
Obviously, $t'$ and $t''$ must belong to the same phase, because at the end of each phase the active checkpoint is removed from $\checkpoints$, i.e., it is no longer present in the next phase.

If $t''$ is the last step of the phase then the lemma follows, because $\weightOfC{C}{t''+1} = 0\leq\weightOfC{C}{t'}$.

We will now prove that lemma holds when $t''$ is not the last step of the phase. Let us suppose for a contradiction that $\weightOfC{C}{t''+1} > \weightOfC{C}{t'}$. From the assumptions of the lemma and definition of an active interval we get that $C$ is not the active checkpoint in step $t''+1$. Because we are still in the same phase, it means that there must exist a checkpoint $C^*$ such that $\weightOfC{C^*}{t''+1} \geq \weightOfC{C}{t''+1}$. Moreover from Lemma \ref{lem:inactive_notgrow} we know, that because $C^*$ was inactive from step $t'$ to $t''$, it holds $\weightOfC{C^*}{t'} \geq \weightOfC{C^*}{t''} \geq \weightOfC{C^*}{t'' + 1}$. This gives us 
\[\weightOfC{C^*}{t'} \geq \weightOfC{C^*}{t''+1} \geq \weightOfC{C}{t''+1} > \weightOfC{C}{t'},\]
which is in a contradiction to the assumption, that $C$ is the active checkpoint in step $t'$.
\end{proof}

\begin{remark}\label{rem:active_notgrow}
Let $C$ be a checkpoint and let $[t',t'']$ be an active interval of $C$.
For every step $t \in \{t',\ldots, t''\}$ it holds $\weightOfC{C}{t}\geq\weightOfC{C}{t''+1}$. 
\qed
\end{remark}

We now conclude from the two previous lemmas that the weight of each checkpoint is not greater at the end of a phase than at the beginning of the phase.
\begin{lemma}\label{lem:f_notgrow}
Suppose that a phase starts in step $t'$ and ends in step $t''$.
For each checkpoint $C$ present in this phase it holds $\weightOfC{C}{t''+1}\leq\weightOfC{C}{t'}$.
\end{lemma}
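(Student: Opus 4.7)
The plan is to chain Lemmas~\ref{lem:inactive_notgrow} and~\ref{lem:active_notgrow} across the phase. Fix a checkpoint $C$ present in the phase $[t',t'']$. Since the (active) status of $C$ can change from step to step, I first partition $\{t',t'+1,\ldots,t''\}$ into maximal consecutive blocks during which the status of $C$ is constant, obtaining indices
\[t'=s_0<s_1<\cdots<s_m=t''+1\]
so that on each sub-interval $[s_i,s_{i+1}-1]$ the checkpoint $C$ is either active in every step or inactive in every step.

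Next I handle each block separately. If $[s_i,s_{i+1}-1]$ is an inactive block, I apply Lemma~\ref{lem:inactive_notgrow} step by step (using that $C$ remains present throughout the phase by the convention stated just before Lemma~\ref{lem:inactive_notgrow}) to obtain
\[\weightOfC{C}{s_{i+1}}\leq\weightOfC{C}{s_{i+1}-1}\leq\cdots\leq\weightOfC{C}{s_i}.\]
If $[s_i,s_{i+1}-1]$ is an active block, then by maximality this interval is precisely an active interval of $C$ in the sense used in Lemma~\ref{lem:active_notgrow}, so that lemma yields $\weightOfC{C}{s_{i+1}}\leq\weightOfC{C}{s_i}$ directly. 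Concatenating these inequalities from $i=0$ up to $i=m-1$ gives $\weightOfC{C}{t''+1}\leq\weightOfC{C}{t'}$, which is what is claimed.

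The one subtlety to verify is that the partition above is well-defined, i.e.\ that an ``active block'' inside $[t',t'']$ really coincides with a genuine active interval (which by definition is maximal across all steps, not only within the phase). For the right endpoint this is automatic: if $C$ is active in $t''$, then $C=C_{\max}$ in step $t''$ and $C$ reaches its $\size$-th expansion during $t''$, so $\ProcUpgrade$ removes $C$ and $C$ is not present (hence not active) in step $t''+1$. For the left endpoint, if $C$ were active in step $t'-1$ then the same argument applied to the previous phase would imply that $C$ was removed by the $\ProcUpgrade$ call between the two phases, contradicting the hypothesis that $C$ is present in $[t',t'']$. Hence active blocks inside the phase are exactly active intervals of $C$, and the chaining argument goes through. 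I expect this boundary check to be the only delicate point; the rest is a straightforward telescoping of the two preceding lemmas.
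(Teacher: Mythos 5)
Your proof is correct and follows essentially the same route as the paper: decompose the phase into inactive steps (handled by Lemma~\ref{lem:inactive_notgrow}) and active intervals (handled by Lemma~\ref{lem:active_notgrow}) and chain the inequalities. Your explicit verification that an active block inside the phase is a genuine (maximal) active interval is a boundary detail the paper passes over silently, and your argument for it is sound.
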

\begin{proof}
Each checkpoint $C$ can be active or inactive in different steps during the whole phase. If in some step $t \in \{t', \ldots, t'' \}$ a checkpoint $C$ is inactive then from Lemma \ref{lem:inactive_notgrow} we have that the weight of it will not increase, i.e., $\weightOfC{C}{t} \geq \weightOfC{C}{t + 1}$. On the other hand, Lemma \ref{lem:active_notgrow} guarantees us, that the weight of an active checkpoint can not be greater at the end of any its active interval than at the beginning.
Since end of a phase is the end of some active interval, this finishes the proof.
\end{proof}

\begin{remark}\label{rem:f_atends}
Suppose that a phase ends in a step $t$ and the next one ends in a step $t'$. If a checkpoint $C$ is inactive (but present) in steps $t$ and $t'$, then $\weightOfC{C}{t'} \leq \weightOfC{C}{t}$.
\qed
\end{remark}

\subsection{How many nodes are explored by a checkpoint?}

Define a \emph{bottleneck} of a checkpoint $C$, denoted by $\bottleneckOfC{C}$ to be its minimum weight taken over all steps in which $C$ was present. (Note that a checkpoint may be present in many consecutive phases.)

Suppose that a node $v$ has been reached by a searcher for the first time in a step $t$.
Let $C$ be the active checkpoint in step $t$.
We say that $v$ has been \emph{explored by} $C$.

If an expansion of an active checkpoint $C$ reaches in a step $t$ a node $u$ already explored by some checkpoint $C'$, then in most situations $u$ does not need to be guarded. However there might occur a ``corner situation'', when $u$ still needs to be guarded in order to avoid contamination. In such case, the algorithm clearly needs one searcher on $u$ to guard it and so it is counted in our analysis due to the `ownership' relation used in the definition of the weight of a checkpoint.

The next lemma states a lower bound on the number of nodes explored by a checkpoint reaching its last expansion.

\begin{lemma}\label{lem:active_search}
Suppose that a phase ends in a step $t$.
Let $C$ be the active checkpoint in step $t$.
The number of nodes explored by $C$ in all steps is at least $\bottleneckOfC{C}\size$.
\end{lemma}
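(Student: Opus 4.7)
The plan is to enumerate the $\size$ expansions of $C$ by the steps $t_1<t_2<\cdots<t_{\size}=t$ in which they are executed, let $E_i$ be the set of nodes that are first reached by any searcher during step $t_i$, and prove the inequality
\[
|E_i|\geq\weightOfC{C}{t_i}
\]
for every $i\in\{1,\ldots,\size\}$. Combined with the trivial bound $\weightOfC{C}{t_i}\geq\bottleneckOfC{C}$, which holds because $C$ is active — and in particular present — at each $t_i$, and with the fact that the $E_i$'s are pairwise disjoint with union equal to the set of nodes explored by $C$, summing over $i$ yields the claimed $\bottleneckOfC{C}\cdot\size$ bound at once.

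To prove $|E_i|\geq\weightOfC{C}{t_i}$ I would work on the outer layer $R_i:=\rectOfF{F}{i}\setminus\rectOfF{F}{i-1}$ of the $i$-th rectangle and, for each owned node $v\in\expansionOfC{C}{t_i}$, locate a contaminated neighbor $u_v\in R_i$. Such a $u_v$ exists because (i) $v$ is guarded and hence has some contaminated grid-neighbor, and (ii) every grid-neighbor of $v$ that lies inside $\rectOfF{F}{i-1}$ automatically belongs to $\expOfCplus{C}{i-1}$ — extend the path witnessing $v\in\expOfC{C}{i-1}$ by one edge — and is therefore already clean. By Theorem on $\ProcConnSearch$ applied inside $\ProcClean$, every contaminated node of $\expOfC{C}{i}$ reachable from a guarded boundary node is cleaned during step $t_i$; in particular $u_v$ is cleaned, so it enters $E_i$.

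The technical crux is that the assignment $v\mapsto u_v$ is injective, and it is here that the grid structure is essential. A direct case analysis on the position of a candidate $u\in R_i$ on the rectangular annulus — top/bottom row, left/right column, the four corners, and the degenerate base case $i=1$ in which $R_0=F$ is a $1$-dimensional segment — shows that $u$ has at most one grid-neighbor on the preceding layer $R_{i-1}:=\rectOfF{F}{i-1}\setminus\rectOfF{F}{i-2}$. Consequently no $u\in R_i$ can be a contaminated neighbor of two distinct owned nodes $v_1,v_2\in R_{i-1}$, and the count of distinct $u_v$'s is at least $\weightOfC{C}{t_i}$. I expect this geometric case analysis — particularly corners and the degenerate base case — to be the most tedious step of the argument. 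Note that I work with $\weightOfC{C}{t_i}$, the weight taken \emph{before} the $i$-th expansion when $C$ is still present, rather than any post-expansion quantity; this makes the bound apply uniformly in the final step $i=\size$ and sidesteps the fact that $\ProcUpgrade$ removes $C$ from $\checkpoints$ at the end of $t_{\size}$.
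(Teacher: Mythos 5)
Your skeleton coincides with the paper's: enumerate the $\size$ steps $t_1<\cdots<t_{\size}=t$ in which $C$ is active, bound each step's contribution from below by $\weightOfC{C}{t_i}\geq\bottleneckOfC{C}$, and sum over disjoint contributions. Where you diverge is in how the per-step contribution is certified, and that is where there is a genuine gap. You want $|E_i|\geq\weightOfC{C}{t_i}$ with $E_i$ the set of nodes reached for the \emph{first time} in step $t_i$, and you certify it by assigning to each owned node $v$ a contaminated neighbour $u_v$ on $\rectOfF{F}{i}$ and claiming $u_v\in E_i$. But ``$u_v$ is a contaminated neighbour of $v$'' only means the edge $\{v,u_v\}$ has not been cleared; it does not mean $u_v$ has never been visited. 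The node $u_v$ may already have been explored --- and be currently guarded --- by the expansion of a \emph{different} checkpoint that reached it from the other side; this is precisely the ``corner situation'' the paper flags in the paragraph immediately preceding this lemma. In that case $u_v\notin E_i$, your assignment lands outside $E_i$, and $|E_i|\geq\weightOfC{C}{t_i}$ is not established. (The parts of your argument that do work: the geometric claim that each node of $\rectOfF{F}{i}$ has at most one grid-neighbour on $\rectOfF{F}{i-1}$, corners having none, is correct and gives injectivity; and a guarded node of $\expOfC{C}{i-1}$ does have its contaminated neighbour on $\rectOfF{F}{i}$, since any neighbour on or inside $\rectOfF{F}{i-1}$ lies in $\expOfCplus{C}{i-1}$. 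The sole failure point is the inference ``contaminated $\Rightarrow$ newly explored''.)

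The paper sidesteps this entirely, and with far less machinery: rather than counting the new \emph{neighbours} of the owned nodes, it counts the owned nodes themselves. The sets $\expansionOfC{C}{t_i}$ are pairwise disjoint because for distinct $i$ they lie on distinct rectangles of the frontier containing $C$, and each is contained in the set $S$ of nodes explored by $C$; hence $|S|\geq\sum_{i}\weightOfC{C}{t_i}\geq\bottleneckOfC{C}\size$ with no case analysis on the annulus and no injection. To salvage your route you would need the additional (false in general) claim that a contaminated neighbour of an owned node is always unexplored; the clean repair is to switch to counting $\expansionOfC{C}{t_i}$ directly.
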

\begin{proof}
First let us make a remark that nodes can be only explored by $C$ during execution of procedure $\ProcClean$ that took $C$ as an input, i.e., when $C$ is active. Let us denote by $S$ the set of all nodes explored by $C$.

Because $C$ is active in the last step of the phase, it had to be active in exactly $\size$ steps in total, which can be concluded in several past phases. Let $t_1, t_2,\ldots, t_{\size} = t$ be all steps in which $C$ is active. 
Note that
\[\bigcup_{i=1}^{\size}\expansionOfC{C}{t_i} \subseteq S\]
and $\expansionOfC{C}{t_i}\cap\expansionOfC{C}{t_j}=\emptyset$ for $i\neq j$.
The latter follows directly from the fact that nodes in $\expansionOfC{C}{t_i}$ and $\expansionOfC{C}{t_j}$ belong to different rectangles of the frontier containing $C$ for $i\neq j$.
%Because $s$ contains all nodes that were cleared during the active steps and there was no need to guard them and the one that needed to be guarded, which responds to the weight of $C$,
(Recall that $|\expansionOfC{C}{t}|=\weightOfC{C}{t}$ for each step $t$.)
Also from definition of the bottleneck, we get that $\bottleneckOfC{C}\leq\weightOfC{C}{t_i}$ for each $i\in\{1,\ldots,\size\}$ and hence we conclude that:
\[|S| \geq \sum_{i=1}^{\size}\weightOfC{C}{t_i} \geq  \bottleneckOfC{C}\size.\]
\end{proof}

We now give an upper bound on the weight of each inactive checkpoint at the end of a phase.
\begin{lemma}\label{lem:inactive_lower}
Suppose that a phase ends in a step $t$.
Let $C_1,\ldots,C_l$ be all checkpoints present in this phase, where $C_1$ is the active checkpoint in step $t$.
Then, $\bottleneckOfC{C_1}\geq\weightOfC{C_j}{t}$ for each $j\in\{2,\ldots,l\}$.
\end{lemma}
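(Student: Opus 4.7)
The plan is to prove the stronger pointwise statement: $\weightOfC{C_1}{s}\geq\weightOfC{C_j}{t}$ for every step $s$ at which $C_1$ is present; the lemma then follows directly from the definition of the bottleneck. The case $s=t$ is immediate from the max-selection rule of $\ProcCS$: since $C_1$ is chosen as the active checkpoint in step $t$, the inequality $\weightOfC{C_1}{t}\geq\weightOfC{C_j}{t}$ holds for every other present $C_j$.

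For $s<t$, the approach is to locate an intermediate step $s^{\star}$ at which $C_1$ is active and $C_j$ is present, and then to invoke the three-link chain
\[\weightOfC{C_1}{s}\;\geq\;\weightOfC{C_1}{s^{\star}}\;\geq\;\weightOfC{C_j}{s^{\star}}\;\geq\;\weightOfC{C_j}{t}.\]
The middle inequality is exactly the max-selection rule at step $s^{\star}$. The rightmost inequality comes from the observation that $C_j$ is inactive at both $s^{\star}$ and $t$ (because $C_1$ is active at both); chaining Lemma~\ref{lem:inactive_notgrow} on the inactive stretches of $C_j$'s trajectory with Lemma~\ref{lem:active_notgrow} on each of $C_j$'s active intervals inside $[s^{\star},t]$, and applying Lemma~\ref{lem:f_notgrow} together with Remark~\ref{rem:f_atends} across any intervening phase boundaries, yields $\weightOfC{C_j}{s^{\star}}\geq\weightOfC{C_j}{t}$.

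When $C_j$ is already present at $s$, I take $s^{\star}=\min\{r\geq s:C_1\text{ is active in step }r\}$; then $C_1$ is inactive throughout $[s,s^{\star}-1]$ and iterating Lemma~\ref{lem:inactive_notgrow} yields the leftmost inequality, closing the chain. The main obstacle is the subcase in which $C_j$ was created strictly later than $s$, at the end of some earlier phase $p_j$: between $s$ and $t_{p_j}+1$, $C_1$ may have already completed several of its active intervals, so the naive monotonicity bound $\weightOfC{C_1}{s}\geq\weightOfC{C_1}{s^{\star}}$ fails. I plan to handle this case by induction on the phase, exploiting the structure of $\ProcUpgrade$. Inspection of $\ProcUpgrade$ shows that the initial weight $\weightOfC{C_j}{t_{p_j}+1}$ is at most $\weightOfC{C^{(p_j)}}{t_{p_j}}$ plus, when a merge with a $0$-th-expansion checkpoint $C''$ occurs, the quantity $\weightOfC{C''}{t_{p_j}}$; the latter is bounded using the inductive hypothesis for phase $p_j$. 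Pushing this bound on $\weightOfC{C_j}{t_{p_j}+1}$ forward to step $t$ by the same monotonicity machinery used above, and relating $\weightOfC{C_1}{s}$ to quantities at step $t_{p_j}$ via iterated applications of Lemma~\ref{lem:f_notgrow} to $C_1$ across the phases strictly between $s$ and $t_{p_j}+1$, closes the chain. The hardest technical point will be matching these two bounds quantitatively in the presence of the merging behaviour of $\ProcUpgrade$.
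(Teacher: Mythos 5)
Your overall three-link chain is the right idea, and your Case A (where $C_j$ is already present at $s$, so that $s^{\star}$ can be taken inside $[s,t]$) is correct; it is essentially the paper's argument with a different choice of pivot step. The paper instead takes $t'$ to be the \emph{last} step at which $C_1$ attains $\bottleneckOfC{C_1}$ and shows that $t'$ lies in the same active interval as $t$; since an active interval cannot cross a phase boundary (the phase-ending active checkpoint is removed by $\ProcUpgrade$), that $t'$ automatically lies in the final phase, where \emph{every} $C_j$ is present, so the max-selection inequality $\weightOfC{C_j}{t'}\leq\weightOfC{C_1}{t'}$ is available for all $j$ simultaneously and your Case B never arises.

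The genuine gap is in your Case B, and it is twofold. First, you misdiagnose the obstacle: the bound $\weightOfC{C_1}{s}\geq\weightOfC{C_1}{s^{\star}}$ does \emph{not} fail merely because $C_1$ completes several active intervals between $s$ and $s^{\star}$ --- Lemma~\ref{lem:active_notgrow} is precisely the statement that a completed active interval does not increase the weight, and chaining it with Lemma~\ref{lem:inactive_notgrow} (plus Remark~\ref{rem:active_notgrow} if $s$ itself sits inside an active interval) gives the inequality whenever $s^{\star}$ is the \emph{first} step of an active interval of $C_1$. The only real danger is $s^{\star}$ landing strictly inside an active interval, and this cannot happen if you set $s^{\star}=\min\{r\geq t_{p_j}+1 \st C_1 \text{ is active in step } r\}$: since $C_1$ survives to the last phase it is never the active checkpoint in the last step of an earlier phase, so $C_1$ is inactive at $t_{p_j}$ and at $s^{\star}-1$, making $s^{\star}$ the start of an active interval of $C_1$ at which $C_j$ is present; your chain then closes with no new machinery. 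Second, the route you actually propose for Case B does not work as stated: bounding $\weightOfC{C_j}{t_{p_j}+1}$ by $\weightOfC{C^{(p_j)}}{t_{p_j}}$ plus merge terms compares $C_j$ against the weight of the checkpoint that ends phase $p_j$, which can be of order $\size$ and bears no useful relation to $\weightOfC{C_1}{s}$ (which may equal the bottleneck, e.g.\ a small constant). Pushing that bound forward to step $t$ cannot recover the required inequality without reintroducing a max-selection step at which both $C_1$ and $C_j$ are present --- that is, without reverting to the direct argument above.
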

\begin{proof}
Let us denote by $t'$ the last step in which $\weightOfC{C_1}{t'} = \bottleneckOfC{C_1}$. If $t'=t$ then the lemma follows strictly from the definition of an active checkpoint.
We will now prove that lemma stands also when $t'<t$.

Suppose that $t'$ and $t$ do not belong to the same active interval of $C_1$.
From the Remark \ref{rem:active_notgrow} we know that $\weightOfC{C_1}{t''} = \bottleneckOfC{C_1}$ occurs for some $t''$ that does not belong to an active interval. Moreover from Lemma \ref{lem:active_notgrow} we get that every next active interval will need to start and finish on the same weight as the bottleneck, which is in contradiction that $t'$ is the last step when $\bottleneckOfC{C_1}$ occurred.

Hence there must exist an active interval of $C_1$ that contains both $t'$ and $t$. Then, we get from Lemma \ref{lem:inactive_notgrow} and the fact that $C_1$ is active in step $t'$:
\[\weightOfC{C_j}{t}   \leq  \weightOfC{C_j}{t'} \leq  \weightOfC{C_1}{t'} = \bottleneckOfC{C_1}, \quad j \in \{2,\ldots,l\},\]
which finishes our proof.
\end{proof}

Let us introduce a relation $\leqF$ on a set of checkpoints.
Whenever $C \leqF C'$, we say that $C$ is a \emph{predecessor} of $C'$ and $C'$ is a \emph{successor} of $C$.
We stress out that the construction depends on the execution of the algorithm, namely only checkpoints that appear in some step are considered, and the division of the steps into phases shapes the relation.
More precisely, the relation is defined only for checkpoints added to the set $\checkpoints$ during all executions of procedure $\ProcUpgrade$.
To construct the relation we iterate over the consecutive phases of the algorithm.
Initially the relation is empty and once the construction is done for each phase smaller than $i$, we perform the following for the phase $i$.
Let $C$ be the active checkpoint in the last step of phase $i$.
Let $C_1,\ldots,C_l$ be all checkpoints, different than $C$, that have no successors so far and were added to $\checkpoints$ till the end of phase $i-1$.
Then, let $C_j \leqF C$ for each $j\in\{1,\ldots,l\}$.

An important property of our algorithm is that each checkpoint may have only constant number of predecessors:
\begin{lemma} \label{lem:number_predec}
Each checkpoint has at most $\nrPred$ predecessors.
\end{lemma}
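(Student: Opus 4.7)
The plan is to prove the bound by induction on the phase number. For each $k\geq 0$, let $U_k$ denote the set of checkpoints that have been added to $\checkpoints$ by the end of phase $k$ (through some call to $\ProcUpgrade$) and that still have no successor in the partial relation $\leqF$ built through the first $k$ phases. The invariant I would maintain is $|U_k|\leq \nrPred$. Once this is established, the lemma is immediate, since by the definition of $\leqF$ the predecessors of the checkpoint $C$ active in the last step of phase $i$ form exactly $U_{i-1}\setminus\{C\}$, whose cardinality is at most $|U_{i-1}|\leq \nrPred$.

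The first step is to describe the transition $U_{k-1}\mapsto U_k$ triggered by phase $k$ with active checkpoint $C_k$. By the construction of $\leqF$, every element of $U_{k-1}\setminus\{C_k\}$ acquires $C_k$ as its successor and thus leaves the unadopted set; only $C_k$ itself can survive, and only when $C_k\in U_{k-1}$. Then $\ProcUpgrade$ attempts to create up to ten new checkpoints, one on each boundary frontier of $\rectOfF{F_{C_k}}{\size}$. Any candidate that is empty or that is absorbed into an existing checkpoint by the merging rule contributes nothing fresh to $U_k$ (a merged candidate leaves the adoption status of its target unchanged). Writing $N_k$ for the number of genuinely new additions, we thus get $|U_k|\leq [C_k\in U_{k-1}]+N_k$.

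The heart of the argument is the geometric claim $N_k\leq 9$ for every $k\geq 2$. For such $k$, the checkpoint $C_k$ was itself introduced by $\ProcUpgrade$ at the end of some earlier phase $p$ in which its parent $C_p$ reached its $\size$-th expansion, so $F_{C_k}$ is one of the ten boundary frontiers of $\rectOfF{F_{C_p}}{\size}$, and the rectangles $\rectOfF{F_{C_k}}{\size}$ and $\rectOfF{F_{C_p}}{\size}$ necessarily overlap. A case analysis over the ten possible positions of $F_{C_k}$ on $\partial\rectOfF{F_{C_p}}{\size}$ identifies at least one boundary frontier $F^{\star}$ of $\rectOfF{F_{C_k}}{\size}$ that coincides with or lies in $\rectOfF{F_{C_p}}{\size}$; in the cleanest configuration $F^{\star}=F_{C_p}$ itself. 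Since $\rectOfF{F_{C_p}}{\size}$ has been completely processed by phase $p$, every node of $F^{\star}$ that is reachable from the cleaned component of $\grid[\rectOfF{F_{C_p}}{\size}]$ has all of its grid-neighbors in the same cleaned component, while any unreachable node of $F^{\star}$ is contaminated and therefore not a guard. Hence no node of $F^{\star}$ is guarded when $\ProcUpgrade$ is invoked for $C_k$, the candidate checkpoint on $F^{\star}$ is empty, and $N_k\leq 9$.

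Combining these ingredients, the induction closes cleanly. The base case $|U_0|=0$ is trivial. In phase $1$ the active checkpoint is $C_0$, which was not produced by $\ProcUpgrade$ and hence $C_0\notin U_0$, so $|U_1|\leq N_1\leq 10$. For $k\geq 2$, regardless of whether $C_k\in U_{k-1}$ or not, we obtain $|U_k|\leq 1+N_k\leq 1+9=10$, and the lemma follows with $\nrPred=10$. The principal obstacle will be the geometric case analysis underlying $N_k\leq 9$: one must verify in every one of the ten placements of $F_{C_k}$ that a suitable $F^{\star}$ exists, and that the ``no guard on $F^{\star}$'' conclusion continues to hold even in the configurations where the natural candidate for $F^{\star}$ is not strictly interior to $\rectOfF{F_{C_p}}{\size}$ but merely touches its boundary. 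There the argument must appeal to the fact that any unexplored node within $\rectOfF{F_{C_p}}{\size}$ is permanently separated from the explored component, so that grid-neighbors lying just outside $\rectOfF{F_{C_p}}{\size}$ cannot introduce a contaminated edge at a clean node of $F^{\star}$.
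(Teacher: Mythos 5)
Your inductive bookkeeping with the sets $U_k$ is a careful (indeed, more careful than the paper's) version of the same counting idea: the paper's proof is just the observation that a checkpoint is active in the last step of at most one phase, and that at that moment the only successor-less checkpoints are the at most $\nrPred$ ones created by the single call to $\ProcUpgrade$ at the end of the immediately preceding phase. The gap is the geometric claim $N_k\leq 9$ on which your induction leans, and that claim is false. An expansion of a checkpoint $C_p$ clears only those nodes of $\rectOfF{F_{C_p}}{\size}$ that are \emph{connected to $C_p$ inside the nested rectangles}; a node of an interior frontier $F^{\star}$ that is disconnected from $C_p$ in this restricted sense stays contaminated at the end of phase $p$, can be reached much later from an entirely different direction, and at that later moment may have a contaminated neighbour and hence be guarded. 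So when $\ProcUpgrade$ is invoked for $C_k$, the candidate checkpoint on $F^{\star}$ need not be empty even if $F^{\star}$ lies strictly inside the long-since processed rectangle $\rectOfF{F_{C_p}}{\size}$. The paper's own Figure~\ref{fig:alg} illustrates exactly this: in part (c) a new checkpoint $C_8$ emerges \emph{inside} the area already cleaned by $C_0$, and in part (d) a checkpoint $C_{11}$ appears on $C_5$'s $\size$-th rectangle two phases after that rectangle was processed. Your fallback sentence (``any unreachable node of $F^{\star}$ is contaminated and therefore not a guard'') is valid at the end of phase $p$ but not at the end of phase $k>p$, which is when $\ProcUpgrade$ actually inspects $F^{\star}$.

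The reason you needed $N_k\leq 9$ at all is the potential survivor $C_k\in U_{k-1}$ --- equivalently, the fact that the active checkpoint of a phase is excluded from adoption at its own construction step and therefore reappears as a successor-less candidate one phase later. Your accounting honestly exposes this off-by-one in the literal reading of the definition of $\leqF$ (under which the bound is $\nrPred+1$ rather than $\nrPred$), but the repair should not be geometric. Either read the definition as the paper evidently intends --- the predecessors of the checkpoint active at the end of phase $i$ are exactly the checkpoints created by $\ProcUpgrade$ at the end of phase $i-1$, of which there are at most $\nrPred$, one per frontier of the $\size$-th rectangle --- or simply accept the constant $\nrPred+1$: nothing in Lemma~\ref{lem:nr_guards_end} or Theorem~\ref{thm:known-n} depends on this constant being exactly $\nrPred$, only on its being $O(1)$.
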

\begin{proof}
A checkpoint $C$ can only once be active in the last step of some phase $i$, because after that it will not be present in any later phases. At the end of phase $i$ the only checkpoints that do not have any successors are the one that were constructed by the procedure $\ProcUpgrade$ at the end of phase $i-1$. There are at most $\nrPred$ such checkpoints.
\end{proof}

\subsection{The algorithm uses $O(\size)$ searchers in total} \label{sec:totalbound}

We now bound the total weight of all checkpoints at the end of each phase --- note that this bounds the total number of searchers used for guarding at the end of a phase.
A high level intuition behind the proof of Lemma~\ref{lem:nr_guards_end} is as follows.
Since, due to Lemma~\ref{lem:active_search}, each checkpoint $C$ that is active in the last step of a phase explores at least $\bottleneckOfC{C}\size$ nodes in total.
Therefore, the sum of bottlenecks of all such checkpoints $C$ cannot exceed $\size$.
Moreover, $C$ can have at most $10$ predecessors and hence the sum of weights of those predecessors is bounded by $10\bottleneckOfC{C}$ according to Lemma~\ref{lem:inactive_lower}.
Since each checkpoint (except the one that is active in the last step of a given phase) is a predecessor of some checkpoint that is active in the last step of some phase, we bound the sum of all weights of all such checkpoints present in a given phase by $10\size$.

\begin{lemma} \label{lem:nr_guards_end}
Suppose that $C_1,\ldots,C_l$ are all checkpoints of a phase that ends in step $t$, where $C_1$ is active in step $t$.
Then,
\[\sum_{i=1}^l \weightOfC{C_i}{t}  \leq  \weightOfC{C_1}{t} + \nrPred\size.\]
\end{lemma}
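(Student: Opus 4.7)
The plan is to bound $\sum_{i=2}^{l}\weightOfC{C_i}{t}$ by $\nrPred\size$, charging each inactive checkpoint to its successor in $\leqF$ and then bounding the sum of bottlenecks of those successors globally via Lemma~\ref{lem:active_search}. Let $\mathcal{A}$ be the set of all checkpoints that have been active in the last step of some phase up to and including the current one. By the construction of $\leqF$, each $C_i$ with $i\geq 2$ has a unique successor $S_i\in\mathcal{A}$: either $C_i$ still has no successor at the start of the current phase, in which case it is assigned to $C_1$ when the phase ends, or it already has a successor from an earlier phase. Let $t_i$ be the last step of the phase in which $S_i$ was active, so $t_i\leq t$ and $t_i=t$ exactly when $S_i=C_1$.

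The first key step is the per-checkpoint inequality $\weightOfC{C_i}{t}\leq\bottleneckOfC{S_i}$. At step $t_i$ the checkpoint $C_i$ is present (a checkpoint that has been removed from $\checkpoints$ never reappears, so if $C_i$ were absent at $t_i$ it could not be present at $t$) and inactive (since $S_i$ is the active checkpoint at $t_i$), hence Lemma~\ref{lem:inactive_lower} gives $\weightOfC{C_i}{t_i}\leq\bottleneckOfC{S_i}$. At every phase-end strictly between $t_i$ and $t$, $C_i$ is inactive as well: if it were active at some such phase-end, it would itself belong to $\mathcal{A}$ and be removed from $\checkpoints$ immediately afterwards, contradicting the assumption that $C_i$ is present at $t$. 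Applying Remark~\ref{rem:f_atends} successively across these phase-ends therefore yields $\weightOfC{C_i}{t}\leq\weightOfC{C_i}{t_i}\leq\bottleneckOfC{S_i}$.

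The second key step is the global summation. Partitioning $\{C_2,\ldots,C_l\}$ by the value of $S_i$ and invoking Lemma~\ref{lem:number_predec}, which says that every checkpoint has at most $\nrPred$ predecessors, one obtains
\[\sum_{i=2}^{l}\weightOfC{C_i}{t}\;\leq\;\sum_{S\in\mathcal{A}}|\{i\geq 2 : S_i=S\}|\cdot\bottleneckOfC{S}\;\leq\;\nrPred\sum_{S\in\mathcal{A}}\bottleneckOfC{S}.\]
Lemma~\ref{lem:active_search} says that each $S\in\mathcal{A}$ explores at least $\bottleneckOfC{S}\size$ nodes of $\grid$, and by the definition of ``explored'' the sets of nodes explored by distinct checkpoints are pairwise disjoint; since $\grid$ has at most $n$ nodes, $\sum_{S\in\mathcal{A}}\bottleneckOfC{S}\leq\size$. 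Combining yields $\sum_{i=2}^{l}\weightOfC{C_i}{t}\leq\nrPred\size$ and, after adding $\weightOfC{C_1}{t}$, the claimed bound. The step I expect to be the main obstacle is the weight-propagation argument: one must carefully confirm, using the structure of $\leqF$ together with the fact that a checkpoint can be ``active in the last step of a phase'' at most once, that $C_i$ stays present and inactive at every intermediate phase-end so that Remark~\ref{rem:f_atends} can legitimately be chained across all phases between $t_i$ and $t$.
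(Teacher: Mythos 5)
Your proof is correct and follows essentially the same route as the paper's: charge each inactive checkpoint to the unique phase-ending active checkpoint that is its successor under $\leqF$, bound its weight at step $t$ by the successor's bottleneck via Lemma~\ref{lem:inactive_lower} chained with Remark~\ref{rem:f_atends}, and then bound the sum of bottlenecks by $\size$ using Lemma~\ref{lem:active_search} and the disjointness of explored node sets. The only difference is organizational (you index by inactive checkpoints and their successors, the paper indexes by phases and the predecessors of each phase's final active checkpoint), and your explicit verification that each $C_i$ remains present and inactive at every intermediate phase-end is a point the paper leaves implicit.
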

\begin{proof}
Suppose that a phase $j$ ends in step $t$.
Let $t_{i}$ be the last step of phase $i$ and let $C_{i}^0$ be the active checkpoint in step $t_{i}$ for each $i\in\{0,\ldots,j\}$. We denote by $s$ the number of nodes visited by searchers till the end of step $t$. From Lemma \ref{lem:active_search} and the fact that the number of all nodes $n$ is at least $s$ we have:
\begin{equation} \label{eq:10b2}
n \geq s \geq \sum_{i=0}^j \bottleneckOfC{C_i^0} \size  \quad \Rightarrow \quad \nrPred\size \geq \nrPred \sum_{i=0}^j \bottleneckOfC{C_i^0}.
\end{equation}

From Lemma \ref{lem:number_predec} we have that the checkpoints $C_0^0,\ldots,C_j^0$ can have at most $\nrPred$ predecessors. From the definition, they are constructed (i.e., added to collection $\checkpoints$ during execution of procedure $\ProcUpgrade$) at the beginning of the first step of a phase at the end of which their successor is active. Let us denote by $C_i^1,\ldots,C_i^{l_i}$, $0 \leq l_i\leq 10$, the predecessors of $C_i^0$ for each $i\in\{0,\ldots,j\}$ (by $l_i = 0$ we understand that $C_i^0$ has no predecessors).
From Lemma \ref{lem:inactive_lower} we have:
\begin{equation} \label{eq:10b1}
\sum_{k=1}^{l_i} \weightOfC{C_i^k}{t_i} \leq \nrPred \bottleneckOfC{C_i^0}, \quad  i\in\{0,\ldots,j\}.
\end{equation}

Remark \ref{rem:f_atends} assures us that weights of inactive checkpoints will not be greater at the end of the next phase than they are in the last step of current phase:
\begin{equation} \label{eq:10Cl}
\weightOfC{C_i^k}{t} = \weightOfC{C_i^k}{t_j} \leq \weightOfC{C_i^k}{t_{j - 1}} \leq \cdots \leq \weightOfC{C_i^k}{t_i}, \quad  i\in\{0,\ldots,j\}; \quad k\in\{1,\ldots,l_i\}.
\end{equation}
Because
\[\{C_1,\ldots,C_l\}\subseteq\{C_j^0\} \cup \left\{C_i^k\st  k\in\{1,\ldots, l_i\}, i\in\{0,\ldots,j\} \right\},\] we can conclude from Equations~\eqref{eq:10Cl}, \eqref{eq:10b1} and \eqref{eq:10b2} (in this order) that:
\begin{eqnarray*} \nonumber
\sum_{i=1}^l \weightOfC{C_i}{t} &\leq& \weightOfC{C_j^0}{t} + \sum_{i=0}^j \sum_{k=1}^{l_i} \weightOfC{C_i^k}{t}\\ \nonumber
&\leq& \weightOfC{C_j^0}{t} + \sum_{i=0}^j \sum_{k=1}^{l_i} \weightOfC{C_i^k}{t_i}\\ \nonumber
& \leq & \weightOfC{C_j^0}{t} +  \sum_{i=0}^j \nrPred \bottleneckOfC{C_i^0} \\ \nonumber
& \leq & \weightOfC{C_j^0}{t} + \nrPred\size.
\end{eqnarray*}
\end{proof}

\begin{theorem} \label{thm:known-n}
Given an upper bound $n$ of the size of the network as an input, the algorithm $\ProcCS$ clears in a connected and monotone way any unknown underlying partial grid network using $O(\sqrt{n})$ searchers.
\end{theorem}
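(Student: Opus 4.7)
The plan is to prove the theorem by establishing three things in turn: connectedness, monotonicity, and the $O(\sqrt{n})$ bound on the number of searchers used at any moment during the execution of $\ProcCS$.

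First I would verify monotonicity and connectedness. By Theorem~\ref{thm:sense-of-dir}, each call to $\ModProcConnSearch$ cleans its target subgraph in a connected and monotone way. The extra guards deposited on nodes of $\rectOfF{F}{i}$ (which distinguish $\ModProcConnSearch$ from $\ProcConnSearch$) block every contaminated edge crossing out of the cleaned region, so no recontamination can occur. Because $\ProcClean$ always starts such a call from a guarded node $v\in\expOfC{C}{i-1}$ adjacent to a contaminated node of $\expOfC{C}{i}$, the newly cleaned portion is attached to the already clean subgraph, preserving global connectedness. Termination is immediate from the outer loop of $\ProcCS$.

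The heart of the proof is bounding the number of searchers used at any step $t$. I would split the searchers at step $t$ into three groups: (i) guards owned by checkpoints present at step $t$; (ii) searchers active inside the currently running $\ModProcConnSearch$, that is $6i+4$ cleaners plus the $O(\sqrt{n})$ extras deposited on $\rectOfF{F}{i}$; (iii) free searchers, which can be charged to (ii) as they are only being routed to where they are needed. Since $i\le\sqrt{n}$ and the perimeter of $\rectOfF{F}{i}$ is $O(\sqrt{n})$, group (ii) contributes $O(\sqrt{n})$.

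For group (i), Lemma~\ref{lem:nr_guards_end} already gives an $O(\sqrt{n})$ bound at the end of every phase, using that $\weightOfC{C_1}{t}=O(\sqrt{n})$ because the final expansion of $C_1$ lies on the $\sqrt{n}$-th rectangle of its frontier. The missing ingredient is a mid-phase bound. For this, fix a phase starting at step $t_0$ and let $t$ be any step of that phase; let $C^\ast$ be active at $t$ and $C_1,\ldots,C_l$ the other checkpoints present. Iterating Lemma~\ref{lem:inactive_notgrow} from $t_0$ up to $t$, each such $C_j$ satisfies $\weightOfC{C_j}{t}\leq\weightOfC{C_j}{t_0}$, so $\sum_j\weightOfC{C_j}{t}$ is at most the total guarding count carried into the phase from its predecessor, which is $O(\sqrt{n})$ by Lemma~\ref{lem:nr_guards_end} applied to the previous phase. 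The active checkpoint $C^\ast$ owns at most the nodes of its current expansion on $\rectOfF{F}{i}$, again $O(\sqrt{n})$. Adding these two contributions keeps group (i) at $O(\sqrt{n})$ throughout the phase.

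The main obstacle I expect is the transition between phases: $\ProcUpgrade$ creates new checkpoints and redistributes ownership of the nodes that were guarded on $\rectOfF{F}{\sqrt{n}}$ at the end of the previous phase. The key fact to pin down is that $\ProcUpgrade$ places no new searchers and only reassigns existing guards among the new checkpoints, so the total weight at the start of a phase equals the total weight at the end of the previous phase and is therefore $O(\sqrt{n})$. Once that bookkeeping point is made explicit, summing the contributions of groups (i) and (ii) yields the claimed $O(\sqrt{n})$ bound, completing the proof.
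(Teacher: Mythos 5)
Your proposal is correct and follows essentially the same route as the paper: the paper likewise splits the count into the searchers consumed by the current call to $\ProcClean$ (cleaners bounded by Theorem~\ref{thm:sense-of-dir}, plus at most $\nrPred\size$ explorers left on $\rectOfF{F}{i}$) and the guards owned by checkpoints, with the latter bounded by Lemma~\ref{lem:nr_guards_end} at phase boundaries and by an extra additive $O(\size)$ term for the active checkpoint mid-phase. The only nitpick is that your mid-phase step $\weightOfC{C_j}{t}\leq\weightOfC{C_j}{t_0}$ needs Lemma~\ref{lem:active_notgrow} in addition to Lemma~\ref{lem:inactive_notgrow} when $C_j$ had an earlier active interval inside the phase, which is exactly the pairing the paper invokes.
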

\begin{proof}
We will bound the number of searchers $s$ used by a single call to procedure $\ProcClean$ and the total number of searchers $s'$ used for guarding at the end of each step of the algorithm.
Note that $s+s'$ bounds the total number of searchers used by $\ProcCS$. 

We first analyze procedure $\ProcClean$ to give an upper bound on $s$.
We can divide searchers into three groups: \textit{explorers}, \textit{cleaners} and \textit{guards}.
Suppose that procedure $\ProcClean$ performs $i$-th expansion of a checkpoint $C_{max}$.
Denote by $F_{max}$ the frontier that contains the nodes in $C_{max}$.
All searchers located at nodes on the $(i-1)$-th rectangle of $F_{max}$ that need to be occupied in order to avoid recontamination at the beginning of the call to procedure $\ProcClean$ are named to be guards.
The explorers and cleaners are used by algorithm $\ModProcConnSearch$ called during the execution of procedure $\ProcClean$.
Each time $\ModProcConnSearch$ reaches a node $v$ on the $i$-th rectangle of $F_{max}$ such that $v$ needs to be guarded, the searcher used for guarding $v$ is called an explorer.
The searchers used in $\ModProcConnSearch$ that mimic the movements of searchers in algorithm $\ProcConnSearch$ are the cleaners.
We point out that we do not alter here the behavior of $\ProcClean$ and $\ModProcConnSearch$ but just assign one of the three categories to each searcher they use.
Informally speaking, when explorers protect nodes lying on the $i$-th rectangle and the guards protect the ones lying on the $(i-1)$-th rectangle of $F_{max}$, cleaners clean nodes inside the $i$-th rectangle of $F_{max}$ (i.e., the remaining nodes of the $i$-th expansion of $C_{max}$).

The fact that each rectangle of a frontier contains at most $\nrPred\size$ nodes and Theorem~\ref{thm:sense-of-dir} give that:   
\begin{eqnarray} \nonumber
\text{number of explorers} &\leq& \nrPred \size, \\  \nonumber
\text{number of cleaners} &\leq& 6\size + 4. \\  \nonumber
\end{eqnarray}
Thus,
\[s\leq 16\size +4.\]
The guards used to protect nodes lying on the $(i-1)$-th rectangle are accounted for during the estimation of $s'$ below.

We now bound the maximal number of searchers used for guarding at the end of each step $t$ of our search strategy, which we denote by $g_t$. 
It is easy to see that $g_t \leq 10\size$ if $t$ belongs to phase $0$.

Let us now take any step $t$ that belongs to an $i$-th phase, where $i > 0$ and denote by $t'$ the last step of the phase $i - 1$ and by $C$
the active checkpoint in step $t'$.
From Lemma~\ref{lem:nr_guards_end} we know that $g_{t'} \leq \weightOfC{C}{t'} + \nrPred\size \leq 20\size$. 
The latter inequality follows from the fact that all nodes in $\expansionOfC{C}{t'}$ 
belong to the $j$-th rectangle of the frontier that contains $C$, $j\leq\size$, and the number of nodes in this rectangle 
is at most $\nrPred\size$. 

We know now that every phase starts with at most $20\size$ guards. 
If $t$ is the first step of an active interval of some checkpoint, then by Lemmas \ref{lem:inactive_notgrow} and \ref{lem:active_notgrow}
we have that $g_t \leq g_{t'} \leq 20\size$. 
But if $t$ is a step inside some active interval, then an active checkpoint 
can reach at most $10\size$ new nodes that need to be guarded. Because in one step only one checkpoint can be active that leads us to conclusion that
for every step $t$ we have $g_t \leq 30\size$.
Therefore, we obtain that $s'\leq 30\size$.

Thus, we obtain $s+s'\leq 46\size + 4=O(\size)$ as required. 
\end{proof}

\section {Unknown size of the graph} \label{sec:unknown}

The algorithm we have described needs to know an upper bound of the size of the underlying partial grid network $\grid$.
In this section we design a procedure called $\ModProcCS$ that performs the search using $O(\size)$ searchers and having no prior information on the network.
The procedure is based on a standard technique: guessing an upper bound on $n$ by doubling potential estimate each time.

The procedure $\ModProcCS$ is composed of a certain number of $m$ \emph{rounds}.
In round $i$, procedure $\ProcCS$ first introduces $c\sqrt{2^i}$ new searchers called \emph{$i$-th team}, where $c$ is a constant from the asymptotic notation in Theorem~\ref{thm:known-n}.
Then, a call to $\ProcCS$ is made, where procedure $\ProcCS$ is using only the searchers of the $i$-th team.
The outcome can be twofold.
The procedure may succeed in searching the entire graph and in such case the $i$-th round is the last one and $\ModProcCS$ is completed, or the procedure may encounter a situation in which it would be forced to use more than $c\sqrt{2^i}$ searchers to continue. In such case $\ProcCS$ stops, the $i$-th round ends and the $(i+1)$-th round will follow.
Once the $i$-th round is completed, the searchers of the $i$-th team stay idle indefinitely.
We point out that during the execution of an $i$-th round, $i>1$, procedure $\ProcCS$ using the searchers of the $i$-th team is ignoring the fact that the network may be partially clear as a result of the work done in previous rounds. Moreover, the searchers of $j$-th team for each $j<i$ are not used and thus also ignored during $i$-th round.

We close this section by giving an upper bound of the number of searchers that need to be used in the presented modified version of our algorithm.
\begin{theorem}\label{thm:unknown}
There exists a distributed algorithm that clears (starting at an arbitrary homebase) in a connected and monotone way any unknown underlying partial grid network using $O(\sqrt{n})$ searchers. The algorithm receives no prior information on the network.
\end{theorem}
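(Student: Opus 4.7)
The plan is to analyze $\ModProcCS$ by invoking Theorem~\ref{thm:known-n} once the guessed size of the network is large enough, while controlling the cost incurred by the earlier failed rounds. Let $c$ be the constant hidden in the $O(\sqrt{n})$ of Theorem~\ref{thm:known-n}. I will argue in turn: (a) termination with a fully cleared grid within $O(\log n)$ rounds, (b) an $O(\sqrt{n})$ bound on the number of searchers simultaneously present, and (c) preservation of monotonicity and connectedness across rounds.

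For (a) and (b), observe that in the first round $m$ with $2^m\geq n$ the $m$-th team has $c\sqrt{2^m}\geq c\sqrt{n}$ searchers. By Theorem~\ref{thm:known-n}, the call to $\ProcCS$ with upper bound $2^m$ executed by this team is guaranteed to clear the entire grid, so no further round is needed. At any moment during the execution, the searchers present are those of rounds $1,\dots,i$ for some $i\leq m$, totalling $\sum_{j=1}^{i} c\sqrt{2^j}$; this is a geometric series dominated by its last term, hence $O(\sqrt{2^i})=O(\sqrt{n})$.

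The main obstacle is (c): verifying that restarting $\ProcCS$ from scratch on a partially cleared grid neither recontaminates any edge nor disconnects the clean subgraph. The key observation is that when round $i$ ends, the searchers of the $i$-th team remain idle at their current positions, and these positions include every node whose guarding was required to protect the then-clean subgraph. Consequently, no edge that was clean at the end of round $i$ can ever become recontaminated in a later round. In round $i+1$, team $i+1$ starts at the homebase --- which lies in the clean region of round $i$ --- and executes $\ProcCS$ as if the network were fully contaminated. Each sliding move by a team-$(i+1)$ searcher either traverses an already clean edge (which then remains clean) or clears a previously contaminated one, so the overall clean subgraph only grows. Connectedness is preserved because the team-$(i+1)$ clean region and the team-$i$ clean region share the homebase. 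The redundant work that team $i+1$ performs inside the team-$i$ region is harmless: the invariants required by Theorem~\ref{thm:known-n} concern only the moves of team $i+1$, which internally treats the network as contaminated, and once team $i+1$ exits the team-$i$ region it extends the clean subgraph further into the rest of the grid. Combining (a), (b) and (c) yields the theorem.
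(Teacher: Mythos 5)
Your proposal is correct and follows essentially the same route as the paper: the geometric-series/doubling argument for the $O(\sqrt{n})$ bound, with termination guaranteed once the guessed bound $2^m$ reaches $n$ so that Theorem~\ref{thm:known-n} applies. Your part (c) on why idle guards from earlier rounds prevent recontamination and why connectedness is preserved via the shared homebase is a welcome elaboration of details that the paper handles only in the prose describing $\ModProcConnSearch$'s wrapper $\ModProcCS$ rather than in the proof itself.
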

\begin{proof}
Let $n$ be the number of nodes of  the partial grid network, which is unknown to our procedure.
The number of rounds $m$ fulfills $2^{m-1} < n \leq 2^{m}$, i.e. $m = \ceil{\log_2{n}}$.
At the end of $i$-th round, $c \sqrt{2^i}$ searchers need to stay in their last positions till the end of our procedure and are not used in subsequent rounds.
This means that the total number of searcher $s$ is upper bounded by a sum of searchers used in every round:
\begin{eqnarray*}
s & \leq & c\sqrt{2} + c\sqrt{2^2} + \dots +  c\sqrt{2^{\ceil{\log_2{n}}}} = c\sum_{j=1}^{\ceil{\log_2{n}}}\left( \sqrt{2} \right)^j \\
  & =    & \sqrt{2}c \frac{ 1- \sqrt{2}^{\ceil{\log_2{n}} } }{1 - \sqrt{2} } = \frac{\sqrt{2}c}{\sqrt{2} - 1} \left( \sqrt{2^{\ceil{\log_2{n}} }} - 1 \right).
\end{eqnarray*}
Because $\sqrt{n} \leq \sqrt{2^{\ceil{\log_2{n}}}} < \sqrt{2n}$, we conclude
\[s <  \frac{\sqrt{2}c}{\sqrt{2} - 1} \left(\sqrt{2n} - 1 \right)\quad \Rightarrow \quad s= O(\sqrt{n}) .\]
\end{proof}

\section{Conclusions} \label{sec:conclusions}

\subsection{Graph-theoretic modeling} \label{sec:polygon}
In this section we consider a searching scenario for two-dimensional environment and its modeling via graph theory.
This provides another motivation for the distributed graph searching model used in this work.
Consider a continuous search problem in which $k$ searchers initially placed at the same location need to capture the fugitive hiding in an arbitrary polygon that possibly has holes.
The polygon is not known a priori to the searchers.
The fugitive is considered captured in time $t$ when it is located at distance at most $r$ from some searcher at time point $t$.
(The $r$ can be related to physical dimensions of searchers and/or their visibility range, etc.)

\medskip
Consider the following transition from the above continuous searching problem of a polygon to a discrete one.
Overlap the coordinate system with the polygon in such a way that the origin coincides with the original placement of the searchers.
Then, place nodes on all points with coordinates being multiples of $r$ and lying in the polygon.
Connect two nodes with an edge if the edge is contained in the polygon.
In this way we obtain a partial grid network.
In this brief sketch we omit potential problems that may arise in such modeling, like obtaining disconnected networks or having `blind spots', i.e., points in the polygon that cannot be cleared by using the above nodes and edges only.
We say that a partial grid network $G$ \emph{covers} the polygon if $G$ is connected and for each point $p$ in the polygon there exist a node of $G$ in distance at most $r$ from $p$.
\begin{figure}[htb]
\begin{center}
\includegraphics[scale=0.8]{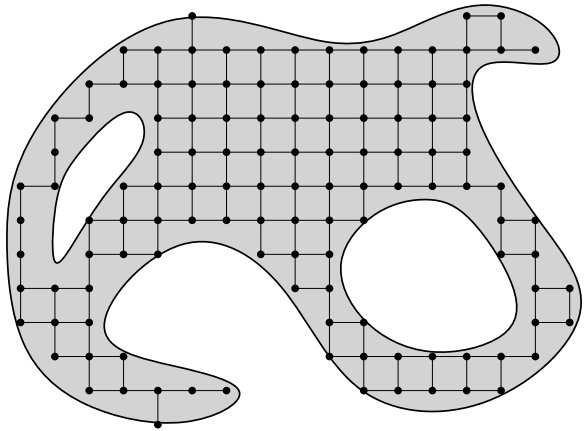}
\caption{An example of the construction of a partial grid network.}
\label{fig:polygon}
\end{center}
\end{figure}
See Figure~\ref{fig:polygon} for an example.

Note that any search strategy $\cS'$ for a polygon $P$ can be used to obtain a search strategy $\cS$ for underlying partial grid network $G$ as follows.
For each searcher $s$ used in $\cS'$ introduce four searchers $s_1,\ldots,s_4$ that will `mimic' its movements by going along edges of $G$.
More precisely, the searchers $s_1,\ldots,s_4$ will ensure that at any point, if $s$ is located at a point $(x,y)$, then $s_1,\ldots,s_4$ will reside on nodes with coordinates $(\lfloor x/r\rfloor,\lfloor y/r\rfloor)$, $(\lfloor x/r\rfloor,\lceil y/r\rceil)$, $(\lceil x/r\rceil,\lfloor y/r\rfloor)$, $(\lceil x/r\rceil,\lceil y/r\rceil)$.
In this way, at any time point, the four searchers in $\cS$ protect an area that contains the area protected by $s$ in $\cS'$.
This allows us to state the following.
\begin{observation}
Let $P$ be a polygon and let $G$ by an underlying partial grid network that covers $P$.
Then, there exists a search strategy for $G$ using $k$ searchers such that its execution in $G$ results in clearing $P$ and $k=O(p)$, where $p$ is the minimum number of searchers required for clearing $P$ (in continuous way).
\end{observation}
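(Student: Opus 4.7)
The plan is to prove the observation constructively by taking an optimal continuous $p$-searcher strategy $\cS'$ for $P$ and lifting it to a discrete search strategy $\cS$ for $G$ via the four-corner mimicking scheme sketched immediately before the statement. I would proceed in three stages: defining $\cS$ formally as a sequence of edge slides in $G$, verifying that $\cS$ is a valid connected monotone search strategy for $G$, and showing that executing it catches the continuous fugitive in $P$. The searcher count is immediate, since each of the $p$ continuous searchers spawns exactly four grid searchers, giving $k \leq 4p = O(p)$.

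For the first stage I would discretize time in $\cS'$ by the \emph{events} at which some continuous searcher $s$ crosses a horizontal or vertical grid line. Between consecutive events the cell containing $s$ is fixed, so the four grid searchers $s_1,\ldots,s_4$ stay in place. At an event, two of the $s_i$ need to slide to adjacent grid nodes (or four of them if $s$ crosses a corner, in which case the slides are serialized). Each such slide is along an edge of $G$, because $s$ remains inside $P$ throughout the motion and the grid covers $P$, so the relevant unit segment lies inside $P$ and appears as an edge of $G$. Concatenating all event-triggered slides in the order dictated by $\cS'$ yields the move sequence that comprises $\cS$.

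For the second stage I would maintain the invariant that at every moment the subgraph cleared by $\cS$ corresponds, via the cell embedding, to a planar region containing the region cleared by $\cS'$ up to that moment. Monotonicity of $\cS'$ then translates to monotonicity of $\cS$, and connectedness is preserved analogously from the connectedness of the continuous clear region. For the third stage I would argue that any fugitive position captured by $\cS'$ — i.e., within distance $r$ of some continuous searcher $s$ — lies in a grid cell all of whose corners are guarded by the corresponding $s_1,\ldots,s_4$, so it cannot be in the contaminated subgraph of $G$ at the termination of $\cS$. Since $\cS'$ clears $P$, the fugitive is captured, which translates into the absence of any remaining safe hiding region in $P$ when $\cS$ terminates.

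The main obstacle I expect is handling simultaneity and serialization: in $\cS'$ several continuous searchers may be moving at once, and their discrete counterparts must be interleaved into a linear sequence of edge slides without breaking the area-containment invariant at any intermediate configuration. Corner-crossing events, where multiple $s_i$ positions change at the same continuous time, are the most delicate to sequence, and one must additionally verify that each intermediate slide has its destination edge available and does not trigger recontamination of previously cleared nodes. A careful bookkeeping argument, possibly by introducing a fictitious slow-down of $\cS'$ around each event so that only one corner-move happens at a time, should resolve these subtleties without affecting the searcher count.
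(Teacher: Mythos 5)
Your proposal follows essentially the same route as the paper: the paper's entire argument is the four-corner mimicking scheme (each continuous searcher $s$ simulated by four grid searchers at the floor/ceiling corners of its cell, so that the protected area only grows and $k\leq 4p$), presented explicitly as a brief sketch that omits the event-sequencing and boundary subtleties you flag. Your additional bookkeeping (time discretization by grid-line crossings, the area-containment invariant, serialization of corner events) is a faithful elaboration of that same idea rather than a different argument.
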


\subsection{Open problems}
In view of the lower bound shown in \cite{INS09} that even in such simple networks as trees each distributed algorithm may be forced to use $\Omega(n/\log n)$ times more searchers than the connected search number of the underlying network, one possible line of research is to restrict attention to specific topologies that allow to obtain algorithms with good provable upper bounds. This work gives one such example. An interesting research direction is to find other non-trivial settings in which distributed search can be conducted efficiently.

\medskip
The above questions related to network topologies can be stated more generally: what properties of the distributed model are crucial for such search for fast and invisible fugitive to be efficient?
This work and also a recent one \cite{BorowieckiDK15} suggest that a `sense of direction' may be one such factor.
Possibly interesting directions may be to analyze the influence of visibility on search scenarios.

\medskip
We finally note that the only optimization criterion that was of interest in this work is the number of searchers.
This coincides with the research done in offline search problems where this was the most important criterion giving nice ties between graph searching theory and structural graph theory.
However, one may consider adding different optimization criteria like time (defined as the maximum number of synchronized steps) or cost (the total number of moves performed by all searchers).

\medskip
An interesing research direction is to study agent guaranteed searching algorithms, in this or more general settings, under weaker assumptions regarding agent capabilities like memory size or ability to distinguish directions etc.

\bibliographystyle{plain}

\bibliography{bibliography}

\end{document}